\newcommand{\real}{\mathbb{R}}
\newcommand{\pl}{PL\xspace}
\newcommand{\ep}{\varepsilon}
\newcommand{\rv}{\mathbf{r}}	
\newcommand{\sv}{\mathbf{s}}	
\newcommand{\qv}{\mathbf{q}}	
\newcommand{\spv}{\mathbf{s}'}	
\newcommand{\qpv}{\mathbf{q}'}	
\newcommand{\sd}{\mathsf{d}}
\newcommand{\rd}{\mathsf{m}}
\newcommand{\revision}[1]{#1}
\newtheorem{theorem}{Theorem}
\newtheorem{lemma}[theorem]{Lemma}
\title{Robust Geometric Predicates for Bivariate Computational Topology}
\author{
Petar Hristov\thanks{e-mail: petar.hristov@liu.se} 
\and Ingrid Hotz\thanks{e-mail: ingrid.hotz@liu.se}
\and Talha Bin Masood\thanks{e-mail: talha.bin.masood@liu.se}
}
\affiliation{\scriptsize Scientific Visualization Group, Department of Science and Technology (ITN), Link\"oping University, Norrk\"oping, Sweden}
\abstract{
    We present theory and practice for robust implementations of bivariate Jacobi set and Reeb space algorithms. 
    Robustness is a fundamental topic in computational geometry that deals with the issues of numerical errors and degenerate cases in algorithm implementations.
    Computational topology already uses some robustness techniques for the development of scalar field algorithms, such as those for computing critical points, merge trees, contour trees, Reeb graphs, Morse-Smale complexes, and persistent homology.
    In most cases, robustness can be ensured with floating-point arithmetic, and degenerate cases can be resolved with a standard symbolic perturbation technique called Simulation of Simplicity.
    However, this becomes much more complex for topological data structures of multifields, such as Jacobi sets and Reeb spaces.
    The geometric predicates used in their computation require exact arithmetic and a more involved treatment of degenerate cases to ensure correctness.
    Neither of these challenges has been fully addressed in the literature so far.
    In this paper, we describe how exact arithmetic and symbolic perturbation schemes can be used to enable robust implementations of bivariate Jacobi set and Reeb space algorithms.
    In the process, we develop a method for automatically evaluating predicates that can be expressed as large symbolic polynomials, which are difficult to factor appropriately by hand, as is typically done in the computational geometry literature.
    We provide implementations of all proposed approaches and evaluate their efficiency.
} 
\keywords{Multivariate data, Reeb space, Jacobi set, Robustness, Simulation of Simplicity}
\begin{document}


\firstsection{Introduction}

\maketitle

\revision{
Topological data structures like Reeb graphs~\cite{biasottiReebGraphsShape2008} and contour trees~\cite{CarrSnoeyinkAxen2003} are indispensable tools for identifying features in the analysis and visualization of scalar field data.
When more than one scalar field is defined over a common domain, which is often the case in practice, these data structures need to be generalized.
The set of critical points, where topological change occurs, is generalized to the Jacobi set~\cite{edelsbrunnerJacobiSetsMultiple2002a}.
Regions in data with uniform topological connectivity are captured by a generalization of the Reeb graph and contour tree, called the Reeb space~\cite{edelsbrunnerReebSpacesPiecewise2008a}.
For bivariate data, the Reeb space is a collection of two-dimensional sheets, attached to each other via edges and vertices (see \cref{fig:reeb-space-examples}).
}

\begin{figure}[hb]
    \centering    \includegraphics[width=.9\columnwidth]{./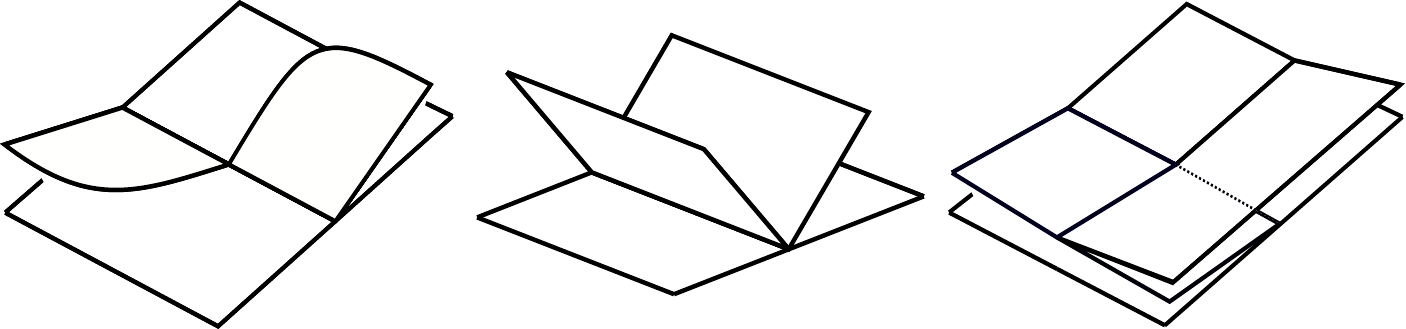}
    \caption{Examples of different Reeb spaces.}
    \label{fig:reeb-space-examples}
\end{figure}

\revision{
One of the main practical challenges in computing Jacobi sets and Reeb spaces is that all current algorithms~\cite{tiernyJacobiFiberSurfaces2017, chattopadhyayAlgorithmCorrectComputation2024, edelsbrunnerReebSpacesPiecewise2008a, hristovArrangeTraverseAlgorithm2025} assume that the input data is generic (i.e., mathematically well behaved).
For bivariate piecewise-linear~(\pl) data, typically this means that no two vertices map to the same point in the range, the images of no two edges are collinear, and the images of no three edges are concurrent (see \cref{fig:segment-degenerate}).
}
While the assumption of genericity greatly simplifies algorithm design, it is not realistic because degenerate cases do occur in real-world data and become more frequent as the size of the data increases.
If not handled correctly, even a small number of degenerate cases can change the combinatorial properties of a topological data structure, resulting in incorrect analysis and feature identification.

\begin{figure}[h]
    \centering
    \begin{subfigure}[b]{0.3\columnwidth}
        \includegraphics[width=\columnwidth]{./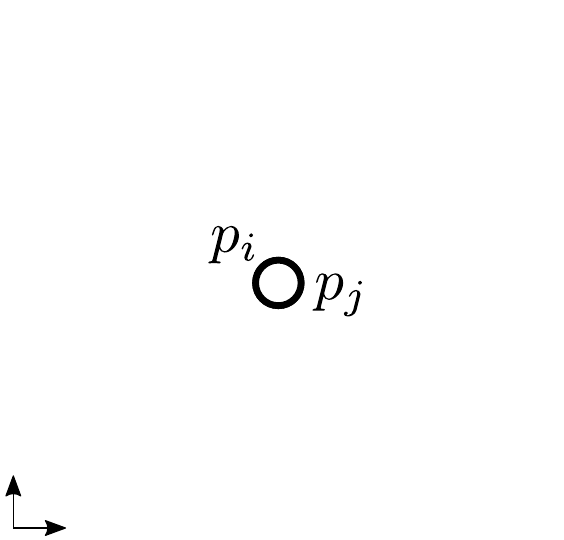}
        \caption{Duplicate points.}
        \label{fig:segment-order-1}
    \end{subfigure}
    \begin{subfigure}[b]{0.3\columnwidth}
        \includegraphics[width=\columnwidth]{./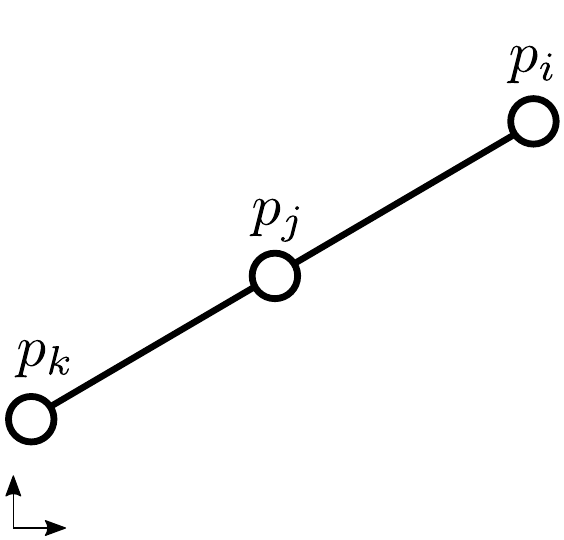}
        \caption{Collinear segments.}
        \label{fig:sub2}
    \end{subfigure}
    \begin{subfigure}[b]{0.3\columnwidth}
        \includegraphics[width=\columnwidth]{./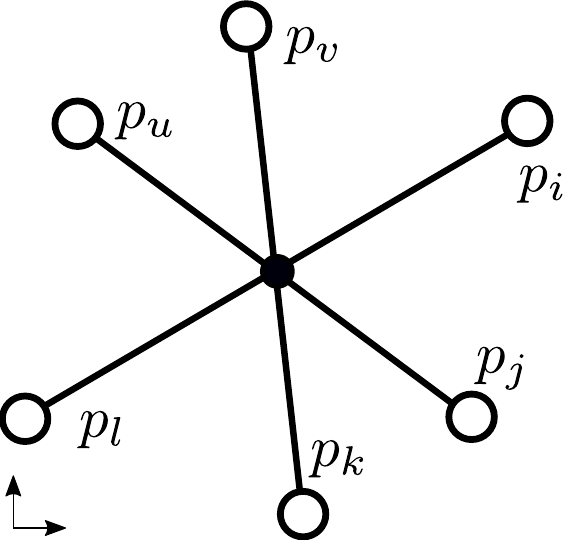}
        \caption{Concurrent segments.}
        \label{fig:degenerte-cases}
    \end{subfigure}
    \caption{
        Examples of the degenerate cases for generic bivariate \pl maps. 
        Each point and segment in the plane is the image of a vertex and edge of the input mesh.
    }
    \label{fig:segment-degenerate}
\end{figure}





In practice, genericity is often enforced by perturbing degenerate data values into general position.
The standard perturbation scheme used for this purpose is called \emph{Simulation of Simplicity}~(SoS)~\cite{edelsbrunnerSimulationSimplicityTechnique1990a}, which symbolically adds infinitesimal numbers $\ep_i$ to the input values.
These $\ep$-terms are never explicitly evaluated; they are only used to break ties when the original values are not distinct.
Since data is sampled at finite resolution, adding infinitesimal values with SoS preserves all non-degenerate features, which are usually the ones of interest in applications.
Spurious features introduced by resolving degenerate configurations can be removed with simplification methods~\cite{edelsbrunnerTopologicalPersistenceSimplification2002, lukasczyk2021}.

In the scalar case, SoS can be implemented elegantly by breaking ties between equal data values using their unique memory indices.
This is enough to resolve all degenerate cases in most scalar field computational topology algorithms.
Since floating-point comparisons are sufficiently robust in practice, this pre-processing step is seamlessly integrated into most computational topology libraries~\cite{tierny2018, moreland2016}.
In fact, SoS is so ubiquitous that it is often cited~\cite{tiernyJacobiFiberSurfaces2017, chattopadhyayAlgorithmCorrectComputation2024, sharmaJacobiSetDriven2022c, hristovArrangeTraverseAlgorithm2025} as the standard method to ensure genericity in multivariate topological algorithms, such as those for Jacobi sets and Reeb spaces, without providing implementation details. 
However, extending SoS to multivariate data is nontrivial.

\begin{figure*}[h!]
\centering
\scalebox{0.8}{
\begin{tikzpicture}[
  node distance=0.3cm and 0.7cm,
  every node/.style={draw, rounded corners, text width=3.6cm, align=center, minimum height=1cm},
  arrow/.style={-{Latex}}
]

\definecolor{lightblue}{RGB}{173,216,230}
\definecolor{lightgreen}{RGB}{180,238,180}
\definecolor{lightorange}{RGB}{255,204,153}
\definecolor{lightgray}{RGB}{211,211,211}
\definecolor{lightpurple}{RGB}{221,160,221}
\definecolor{lightred}{RGB}{255,182,193}

\node (topology) [draw=none, minimum height=0.5cm] {\bf Bivariate Topology};
\node (Jacobi) [below=of topology] {Jacobi set\\ \cref{sec:multifield-topo}};
\node (Reeb) [below=of Jacobi] {Reeb space\\ \cref{sec:multifield-topo}};

\node (pred) [draw=none, right=of topology, minimum height=0.5cm] {\bf Predicates};
\node (ulink) [below=of pred] {Upper link \\ \cref{sec:arrangements}};
\node (segment) [below=of ulink] {Segment order \\ \cref{sec:arrangements}};

\node (algebra) [draw=none, right=of pred, minimum height=0.5cm] {\bf Algebraic formulations};
\node (orient) [below=of algebra] {Point orientation\\ \cref{sec:arrangements}, \cref{eqn:point_orientation}};
\node (parameter) [below=of orient] {Parametrization\\
\cref{sec:arrangements}, \cref{fig:segment-order}};
\node (dual) [below=of parameter] {Dualization\\
\cref{sec:predicates-reeb-space}, \cref{eqn:segment_order}};

\node (dummy1) [draw=none, right=of algebra,text width=0.1cm,] { };

\node (scheme) [draw=none, right=of dummy1, minimum height=0.5cm] {\bf Perturbation schemes};
\node (sos) [below=of scheme] {Simulation of simplicity\\ \cref{sec:symbolic-perturb}, \cref{sec:sos}};
\node (general) [below=of sos] {Perturbing the world \\\cref{sec:symbolic-perturb}, \cref{sec:allienz}};
\node (perturb) [below=of general] {General symbolic scheme\\ \cref{sec:symbolic-perturb}, \cref{sec:yap}};

\draw[arrow] (Jacobi) -- (ulink);
\draw[arrow] (Reeb) -- (segment);
\draw[arrow] (Reeb.east) -- (ulink.west);
\draw[arrow] (ulink) -- (orient);
\draw[arrow] (segment.east) -- (parameter.west);
\draw[arrow] (segment.east) -- ++(0.2cm,0) |- (dual.west);


\draw[arrow] (orient.east) -- (sos.west);
\draw[arrow] (orient.east) -- (general.west);
\draw[arrow] (orient.east) -- (perturb.west);
\draw[arrow] (parameter.east) -- (sos.west);
\draw[arrow] (parameter.east) -- (general.west);
\draw[arrow] (parameter.east) -- (perturb.west);
\draw[arrow] (dual.east) -- (sos.west);
\draw[arrow] (dual.east) -- (general.west);
\draw[arrow] (dual.east) -- (perturb.west);

\end{tikzpicture}
}
\caption{Overview: This work primarily investigates two bivariate topological structures -- Jacobi sets and Reeb spaces. The computation of these structures relies on two key predicates, for which we present various algebraic formulations. Finally, we examine three distinct perturbation schemes aimed at ensuring robust implementation of these predicates.}
\label{fig:overview}
\end{figure*}
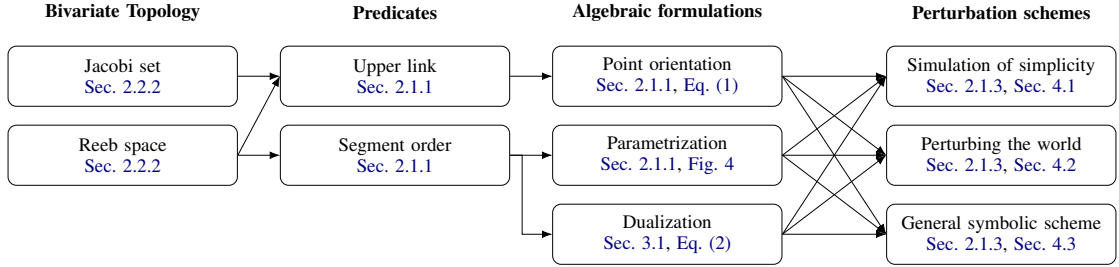

%
%
Symbolic perturbation techniques, including SoS, face several challenges even in two or three dimensions. First, even simple predicates like orientation tests can be numerically unstable with floating-point arithmetic, necessitating exact arithmetic -- typically using arbitrary-precision rationals
\cite{goodmanHandbookDiscreteComputational2018}. 
Second, predicates involving derived values (e.g., segment intersections) yield complex polynomials in the symbolic $\ep$-variables, often with hundreds of terms. These are notoriously difficult to handle manually, limiting the practicality of many symbolic perturbation schemes~\cite{yapSymbolicTreatmentGeometric1990, emirisEfficientPerturbationsHandling1997}.

In this work, we address the challenge of ensuring genericity in multivariate topological algorithms. We develop both the theoretical framework and practical tools required for the robust computation of geometric predicates used in bivariate Jacobi set and Reeb space algorithms. Our approach is compatible with two foundational symbolic perturbation techniques: SoS~\cite{edelsbrunnerSimulationSimplicityTechnique1990a} and general symbolic perturbation~\cite{yapGeometricConsistencyTheorem1990, yapSymbolicTreatmentGeometric1990}. This enables us to transform any degenerate bivariate \pl map $f$ into a generic map $f_{\ep}$, thereby ensuring the robustness of the existing Jacobi set and Reeb space algorithms against both numerical inaccuracies and degeneracies.
%
%
For predicates with large symbolic expressions that are impractical to resolve manually, we introduce automated evaluation via symbolic math tools such as Mathematica or SymPy. These expressions are organized into evaluation tables~\cite{edelsbrunnerSimulationSimplicityTechnique1990a}, from which efficient C++ code can be generated and integrated into existing systems. All predicates are implemented and evaluated.

We review background from computational geometry and topology in \cref{sec:background}. The predicates for computing Jacobi sets and Reeb spaces are outlined in \cref{sec:required-predicates}, and their perturbation via symbolic schemes is described in \cref{sec:schemes}. Genericity of perturbed maps is proved in \cref{sec:genericity}. \Cref{sec:polynomial-sign} introduces our approach to evaluating large symbolic expressions. Implementation and performance are covered in \cref{sec:implementation,sec:evaluation}, followed by discussion and conclusions in \cref{sec:discussion,sec:conclusion}.
\cref{fig:overview} provides a different perspective on the organization of this paper, highlighting the key steps required for the robust computation of bivariate topological structures.


\section{Background}
\label{sec:background}




Before starting our discussion, we will introduce our main definitions.
A \emph{triangulation of a $\sd-$manifold} is a $\sd$-dimensional simplicial complex $M$  whose underlying geometric space is homeomorphic to a $\sd$-manifold \revision{(see \cref{app:background-pl})}.
A \pl function $f: |M| \to \real$ whose range is the one-dimensional Euclidean space is called \textit{scalar}.
A collection of scalar \pl functions defined over a common domain is called a \textit{multifield}~\cite{Fuchs2009}. Equivalently, a multifield can be viewed as a multivariate 
\pl map $f = (f_1, \dots,f_\rd): M \to \real^\rd$, where each $f_i$ is called a \textit{component function}.
In this paper, we will primarily focus on multifields with two components, which are known as \textit{bivariate fields}.

Let $V = \{v_1, \dots, v_{n_0}\}$ and $E = \{e_1, \dots, e_{n_1}\}$ be the vertices and edges of $M$ respectively.
We denote the images of $V$ and $E$ in $\real^\rd$ as $P = \{p_1, \dots, p_{n_0}\}$ and $S = \{s_1, \dots, s_{n_1}\}$, where $p_i = f(v_i)$ for $1 \leq i \leq n_0$ and $s_i = f(e_i)$ for $1 \leq i \leq n_1$.
We will refer to the elements of $P$ and also more generally of the range space as \textit{points}, and to the elements of $S$ as line segments, or simply \textit{segments}.
Each point has coordinates $p_i = (p_{i,1}, \dots, p_{i,\rd})$
We will refer to the edges of $M$ and the segments in $\real^\rd$ using their endpoints -- if an edge $e$ has endpoints $v_i$ and $v_j$, we write $v_iv_j = e \in M$, and $s_{i, j} = f(e) \subset \real^\rd$.








We proceed with background from computational geometry in \cref{sec:comp-geom} and computational topology in \cref{sec:computational-topology}.

\subsection{Computational Geometry}
\label{sec:comp-geom}
Computational geometry algorithms have two key operations: \emph{predicates} and \emph{constructions}.
A geometric predicate is a function that determines a combinatorial relationship between geometric objects when an algorithm has to make a decision.
These include, for example, orientation tests or intersection tests.
Predicates are usually evaluated on the basis of the sign of a determinant or, more generally, the sign of a polynomial.
Constructions are made on the basis of the outcomes of predicates.
For example, adding new vertices at intersection points or new lines on the boundaries of convex hulls.

\subsubsection{Arrangements of Line Segments}
\label{sec:arrangements}

A fundamental problem in computational geometry is partitioning the Euclidean plane into regions, called faces, using a set of lines or line segments.
The resulting partition is known as an \textit{arrangement} and is typically computed in three stages: (1) finding segment intersections, often using a plane sweep algorithm~\cite{goodmanHandbookDiscreteComputational2018}; (2) determining the relative order of these intersections; and (3) building a doubly connected edge list (DCEL) to represent the arrangement~\cite{debergComputationalGeometryAlgorithms2008a}.
Computing an arrangement of line segments can be done in $O(n\log n + k\log n)$ time~\cite{goodmanHandbookDiscreteComputational2018}.
Arrangements can be computed for many different types of geometric objects such as triangles, hyperplanes, circles, and more general curves~\cite{goodmanHandbookDiscreteComputational2018}.



\paragraph*{Required Predicates.}
Arrangement algorithms for line segments rely on three fundamental predicates: \textit{lexicographic order of points}, \textit{orientation of points} and \textit{order of segments}.
Lexicographic order compares two points by their first component and, if those are equal, then by their second component.
It is used by plane sweep algorithms to maintain an event queue of intersection points and segment endpoints.
Orientation tests determine which half-plane a point $p_k = (p_{k, 1}, p_{k, 2})$ is in with respect to the line defined by two other points $p_i = (p_{i, 1}, p_{i, 2})$ and $p_j = (p_{j, 1}, p_{j, 2})$.
This can be evaluated by computing the determinant of the following matrix:
\begin{equation}
    \triangle_p =
    \begin{pmatrix}
        p_{i, 1} & p_{i, 2} & 1 \\
        p_{j, 1} & p_{j, 2} & 1 \\
        p_{k, 1} & p_{k, 2} & 1 \\   
    \end{pmatrix}
    .
    \label{eqn:point_orientation}
\end{equation}
Orientation tests are used to detect intersections between segments and to order segments around a vertex in the DCEL data structure.
The third predicate, segment order, determines the relative order in which the three segments intersect each other.
Although it can be derived from orientation tests, doing so requires computing intersection points, which is costly. A more efficient alternative is to parametrize the segments and compare the parameter values (see \cref{fig:segment-order}).
Rearranging the resulting expression to eliminate the denominators avoids division, improving both performance and robustness.


\begin{figure}[h]
    \centering
    \begin{subfigure}[b]{0.47\columnwidth}
        \includegraphics[width=\columnwidth]{./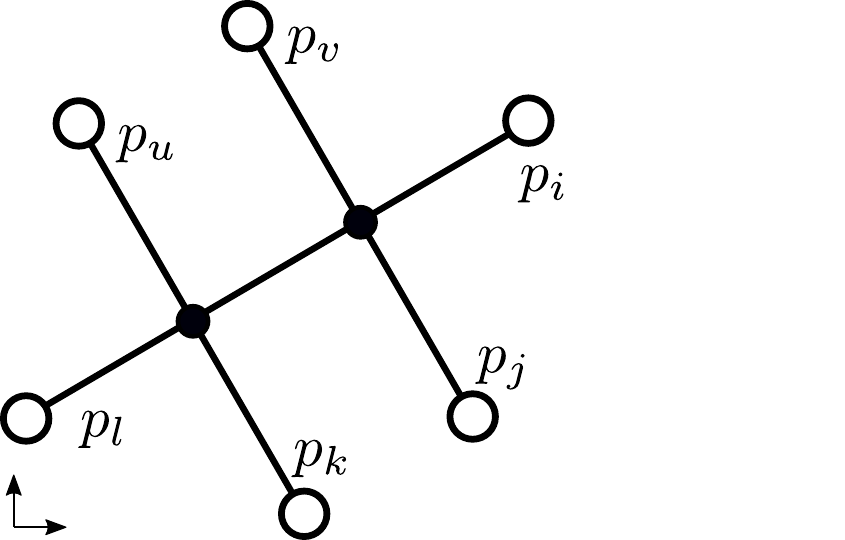}
        \caption{Intersecting segments.}
        \label{fig:segment-order-1}
    \end{subfigure}
    \begin{subfigure}[b]{0.47\columnwidth}
        \includegraphics[width=\columnwidth]{./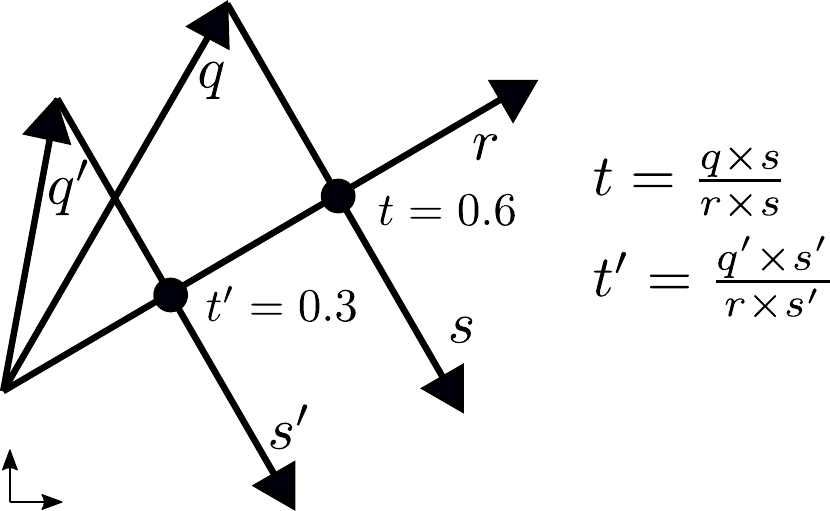}
        \caption{Segment vectors.}
        \label{fig:sub2}
    \end{subfigure}
    \caption{
        The relative order of intersection of three segments can be determined by comparing the parameters $t$ and $t'$ of the intersection points over a parametrization of the first segment.
        We compute $t$ and $t'$ using the scalar cross product $\times$ with $r, s, s', r$ and $r'$.
    }
    \label{fig:segment-order}
\end{figure}

\subsubsection{Robustness}
Computational geometry algorithms typically assume infinite numerical precision and inputs in general position.
These assumptions simplify exposition, but are impractical in real implementations. Ensuring correctness despite numerical errors or degeneracies is known as \emph{robustness}. Robustness is critical for predicates that determine the combinatorial structure of the output.

\paragraph*{Numerical Stability.}
It is well known that floating point arithmetic suffers from precision errors, particularly when comparing expressions that are close to zero, as is the case with geometric predicates.
We encourage the reader to consult the following examples of how orientation tests can fail due to such inaccuracies~\cite[Section 3]{kettnerClassroomExamplesRobustness2008}.
When predicates fail, they can compromise the accuracy of algorithms, potentially leading to errors in the global combinatorial structure of the computed solution.

Robustness in the presence of numerical errors is a broad topic. Here, we briefly highlight some key concepts and refer the reader to a comprehensive overview in~\cite{goodmanHandbookDiscreteComputational2018} for further details.
A central approach in this area is the \emph{exact geometric computation (EGC)} paradigm, which is central to many modern implementations of computational geometry algorithms, such as those found in the open-source CGAL library~\cite{cgal-manual}.
EGC avoids numerical errors arising from floating-point arithmetic by using exact arithmetic with rational number types and arbitrary-precision integers.
To avoid the performance bottleneck of exact arithmetic, typically \emph{filters} are employed to first compute predicates approximately with floating-point numbers.
Exact arithmetic is then only used when the result of the floating-point computation is outside a controlled error bound.

\paragraph*{Degenerate input configurations.}
A remaining challenge for robustness is handling degenerate input configurations.
Degeneracies are algorithm-dependent. 
Some algorithms may fail on collinear points, while others may not. 
The most direct and often preferred approach in practical implementations is to explicitly handle degenerate cases as usual parts of the input~\cite{cgal-manual}.
However, this is not always feasible, especially when a complete classification of all possible degeneracies and their effects on the algorithm's output is unavailable.
As an alternative, there are various techniques that slightly perturb the input into general position.
The simplest form of input perturbation is \emph{random numeric perturbation}, where small random values are added to the input data.
Although this approach can resolve many degenerate cases, it is not guaranteed to eliminate all of them due to finite precision and its random nature.
In contrast, \emph{controlled perturbation} uses adaptive precision to exactly compute a non-degenerate input configuration~\cite{mehlhornGeneralApproachAnalysis2011}.
Finally, several \emph{symbolic perturbation} schemes exist in which the actual perturbation is never explicitly calculated.
Instead, the output of a predicate is decided as if the input was perturbed by an infinitesimally small amount.
The goal is to ensure that the perturbed objects are in general position, to keep all non-degenerate properties of the original data, and to lend to an efficient implementation~\cite{edelsbrunnerSimulationSimplicityTechnique1990a}.

\subsubsection{Symbolic Perturbation}
\label{sec:symbolic-perturb}

One of the first symbolic perturbation schemes is called Simulation of Simplicity (SoS)~\cite{edelsbrunnerSimulationSimplicityTechnique1990a}.
Given an input with a number of geometric objects, each specified by a number of \emph{coordinates} $p_i = (p_{i, 1}, \dots, p_{i, \rd})$, the \emph{$\ep$-expansion} of $p_i$ is defined as $p_i(\ep) = (p_{i, 1} + \ep_{i, 1}, \dots, p_{i, \rd} + \ep_{i, \rd})$, where each $\ep_{i, j}$ is an infinitesimal number called an \emph{$\ep$-variable}, for $j \in \{1, \dots, \rd\}$.
The product of multiple $\ep$-variables $\ep_{i_1, j_1}\ep_{i_2, j_2}\cdots\ep_{i_n, j_n}$ is called an \emph{$\ep$-product} if all $\ep_{i_k, j_k}$ are unique for $k \in \{1, \dots, n\}$.
The specific choice of $\ep$-variables is not important, as long as they satisfy properties that establish a lexicographic order among $\ep$-variables and their products, which reflects differences in their orders of magnitude.

In the SoS framework, many predicates expressed as determinants are guaranteed to be nonzero. 
For example, for an orientation test, this ensures that no three points are treated as collinear.
To determine the sign of a perturbed predicate, SoS successively evaluates \emph{subdeterminants} from the original determinant.
These subdeterminants are lexicographically ordered according to their associated $\ep$-products and organized into a \emph{table of relevant determinants}.
Once a \revision{constant} non-zero subdeterminant is found, the evaluation stops, as smaller $\ep$-products cannot affect the overall sign.
The number of rows in the table is referred to as the maximum \emph{depth}.

A more general approach that allows predicates to be expressed as rational polynomials is given by Yap's \emph{general symbolic scheme}~\cite{yapGeometricConsistencyTheorem1990, yapSymbolicTreatmentGeometric1990}, which avoids explicitly specifying an $\ep$-perturbation. 
It gives a general method for evaluating the perturbed sign $s \in \{-1, 1\}$ of a rational polynomial $f(x_1, \dots, x_n)$ with input $q \in \mathbb{Q}^n$ even when $f(q) = 0$.
This is achieved by establishing a total \emph{admissible} ordering on the partial derivatives of $f$, and then differentiating $f$ with respect to the ordered partial derivatives until a nonzero value is found.
Two examples of total admissible orderings are \emph{lexicographic ordering} and \emph{total ordering}, where the total order is first compared and then the lexicographic ordering is used to break ties.
The partial derivatives can be computed symbolically with computer algebra packages~\cite{seidelNatureMeaningPerturbations1998}.

\revision{
For example, \cref{table:scheme-comparison} shows a direct comparison between SoS and the general symbolic scheme for an orientation test of two points on the projective line in homogeneous coordinates.
Each row in the table for SoS lists subdeterminants that multiply $\ep$-products in decreasing order of significance~\cite[Eq. 6, Sec. 4]{edelsbrunnerSimulationSimplicityTechnique1990a}.
For the general scheme, we list the expressions that correspond to totally ordered partial derivatives.
}

\begin{table}[ht]
\centering
\scalebox{0.8}{
\begin{tabular}{lllll}
\toprule
& \multicolumn{2}{c}{SoS~\cite{edelsbrunnerSimulationSimplicityTechnique1990a}} & \multicolumn{2}{c}{General scheme~\cite{yapSymbolicTreatmentGeometric1990}} \\
\cmidrule(lr){2-3} \cmidrule(lr){4-5}
Row & Subdeterminant & $\ep$-product & Expression & Derivative \\
\midrule
0 & $\begin{vmatrix} p_{i, 1} & p_{i, 2} \\ p_{j, 1} & p_{j, 2} \end{vmatrix}$ & $\emptyset$ & $\begin{vmatrix} p_{i, 1} & p_{i, 2} \\ p_{j, 1} & p_{j, 2} \end{vmatrix}$ & $\emptyset$ \\
1 & $p_{j, 1}$ & $\ep_{i, 2}$ & $p_{j, 2}$ & $\partial p_{i, 1}$ \\
2 & $-p_{j, 2}$ & $\ep_{i, 1}$ & $-p_{j, 1}$ & $\partial p_{i, 2}$ \\
3 & $p_{i, 1}$ & $\ep_{i, 2}$ & $-p_{i, 2}$ & $\partial p_{j, 1}$ \\
4 & $1$ & $\ep_{j, 2}\ep_{i, 1}$ & $p_{i, 1}$ & $\partial p_{j, 2}$ \\
5 &  &  & 1 & $\partial p_{i, 1}p_{j, 2}$ \\
\bottomrule
\end{tabular}
}
\caption{
    Evaluation table comparisons between SoS and Yap's general symbolic scheme for the scalar cross product of two vectors.
}
\label{table:scheme-comparison}
\end{table}

The subsequent work~\cite{emirisEfficientPerturbationsHandling1997} improves efficiency by reducing the algebraic degree of perturbed expressions using a \emph{linear perturbation} $p_{i, j}(\ep) = p_{i, j} + i^j \ep$.
This scheme is less general than SoS but has been successfully applied to orientation tests and in-sphere tests. 
More recent work~\cite{irvingDeterministicPseudorandomPerturbation2013} has focused on combining the efficiency of linear perturbations~\cite{emirisEfficientPerturbationsHandling1997} with the general symbolic scheme~\cite{yapSymbolicTreatmentGeometric1990} or on geometric symbolic perturbation schemes with qualitative symbolic perturbation~\cite{devillersQualitativeSymbolicPerturbation2016}.
Finally, we note that all symbolic perturbation schemes assume exact computation~\cite{devillersQualitativeSymbolicPerturbation2016, seidelNatureMeaningPerturbations1998}.

\paragraph*{Symbolic Perturbation of Lines and Line Segments.}
Symbolic perturbation can be extended to line segments by representing each segment as a point in $\real^4$, formed by concatenating its endpoints.
This resolves common degeneracies such as segments with endpoints or intersection points that share the $x$- or $y$-coordinates, collinear segments, concurrent segments, and segments that share endpoints.
Alternatively, perturbing only the endpoints of segments addresses the first three types of degeneracies, while preserving the incidence relationships between segments that share endpoints~\cite{alliezRemovingDegeneraciesPerturbing2000}.
While the paper does not provide an explicit perturbation scheme for segments, it suggests that it would be similar to the one used for 2D Delaunay triangulations:
$ p_i(\ep) = (p_{i, 1} + \ep p_{i, 2}, p_{i, 2} + \ep^2p_{i, 1} + \ep^3(p_{i, 1}^2 + p_{i, 2}^2)).$
The technique is referred to as \emph{perturbing the world} and applies a single global $\ep$-variable. \cref{fig:perturbation-vertex-segment} shows the two alternatives for symbolic perturbation of line segments.

\begin{figure}[h]
    \centering
    \begin{subfigure}[b]{0.31\columnwidth}
        \includegraphics[width=\columnwidth]{./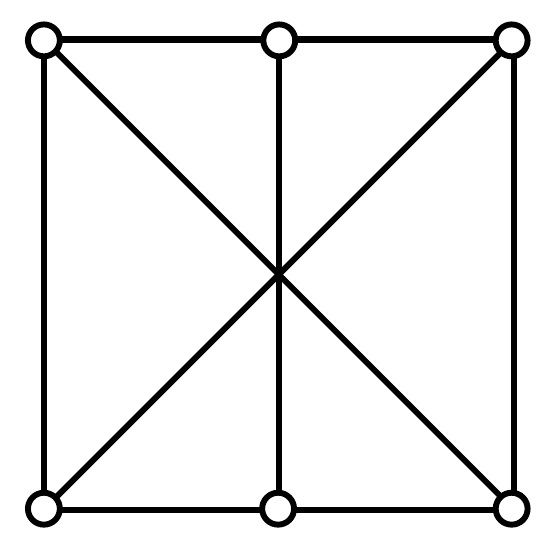}
        \caption{Input segments.}
        \label{fig:segment-order-1}
    \end{subfigure}
    \begin{subfigure}[b]{0.31\columnwidth}
        \includegraphics[width=\columnwidth]{./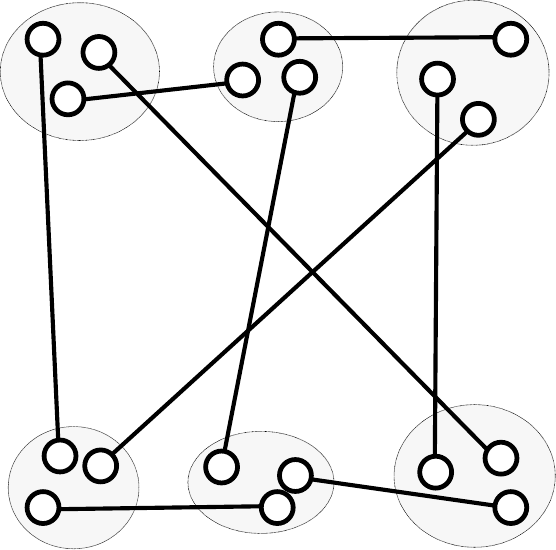}
        \caption{Segment perturbation.}
        \label{fig:sub2}
    \end{subfigure}
    \begin{subfigure}[b]{0.31\columnwidth}
        \includegraphics[width=\columnwidth]{./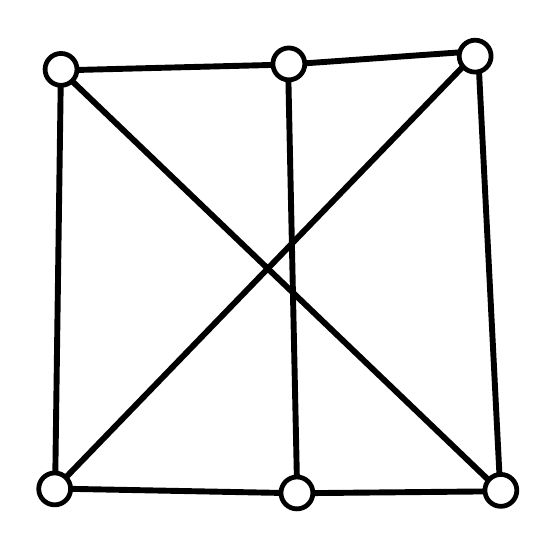}
        \caption{Endpoint perturbation.}
        \label{fig:sub2}
    \end{subfigure}
    \caption{
        Two perturbations of line segments that remove different types of degenerate cases.
        Perturbing the segments (b) as four-dimensional points resolves all degenerate cases, but segments that used to share endpoints no longer do.
        Perturbing the endpoints of segments (c) keeps the overlapping endpoint incidence relationship between segments, while removing all other degenerate cases.
    }
    \label{fig:perturbation-vertex-segment}
\end{figure}







\subsection{Computational Topology}
\label{sec:computational-topology}

In this section, we review key results from computational topology for both scalar fields and multi-fields, with an emphasis on Jacobi set and Reeb space algorithms.
We give special attention to generic piecewise-linear (\pl) maps and their various definitions in the literature.
For a general introduction to \pl and computational topology, we refer the reader to established textbooks~\cite{edelsbrunnerComputationalTopology2009, rourkeIntroductionPiecewiseLinearTopology1982a}.


\subsubsection{Scalar Field Topology}
Scalar field topology studies the structure of real-valued functions $f: |M| \to \real$ called scalar fields, where the input data $M$ is typically represented by a regular grid or a triangulated manifold of dimension $\sd$ \revision{(see \cref{app:background-pl})}.
The main topological structures used to extract and analyze features in scalar fields are merge trees~\cite{BeketayevYeliussizovMorozov2014}, contour trees~\cite{CarrSnoeyinkAxen2003}, Reeb graphs~\cite{biasottiReebGraphsShape2008}, Morse-Smale complexes~\cite{EdelsbrunnerHarerNatarajan2003}, and persistent homology~\cite{Edelsbrunner2008}.
The success of these methods is largely due to efficient combinatorial algorithms for their computation, which require a robust handling of geometric predicates.

The core geometric predicate used by these algorithms is a comparison that determines the upper and lower link of a vertex.
The upper and lower link partition the vertices of the link into those with higher function value (upper link) and lower function value (lower link).
These predicates can be evaluated efficiently and robustly using floating-point arithmetic, requiring only a single comparison per link vertex. To handle degenerate cases, such as vertices with identical function values, SoS is used to break the ties~\cite{edelsbrunnerComputationalTopology2009}.

Simulation of Simplicity~\cite{edelsbrunnerSimulationSimplicityTechnique1990a} simplifies algorithm design by ensuring a scalar field is \emph{generic}, that is, free of degeneracies. 
Symbolic perturbation, however, comes at a cost: it can introduce artificial features, particularly in flat regions.
To remove these, a subsequent \emph{simplification} step is applied, typically guided by a geometric measure such as persistence, volume, or hypervolume.

\subsubsection{Multifield Topology}
\label{sec:multifield-topo}



In many real-world applications, the key features of interest arise from complex interactions among the component scalar fields in multifield data. Since these relationships cannot be captured by the topological descriptors of individual fields or their pairwise combinations, growing attention has been devoted to extending topological data structures to the multivariate setting.
%
Two topological data structures have been successfully generalized with efficient algorithms:
Jacobi sets~\cite{sharmaJacobiSetDriven2022c, tiernyJacobiFiberSurfaces2017}, which extend the concept of critical points, and Reeb spaces~\cite{hristovArrangeTraverseAlgorithm2025}, which generalize Reeb graphs and contour trees.
However, research on the robustness and reliability of these algorithms remains in its early stages.



\paragraph*{Jacobi Sets.}
The Jacobi set of a multifield $f = (f_1, \dots, f_\rd): |M| \to \real^\rd$, where $M$ is a triangulation of a $\sd$-manifold, generalizes the notion of critical vertices in the scalar case to Jacobi simplices~\cite{edelsbrunnerJacobiSetsMultiple2002a}. 
A simplex $\sigma \in M$ of dimension $\sd-\rd$ is part of the Jacobi set, depending on the topology of its upper and lower link, computed in terms of reduced Betti numbers.

The geometric predicate  used to decide whether a vertex $v \in \textnormal{Link}(\sigma)$ is in the upper or lower link of $\sigma$ uses a scalar function $h_{\lambda}: \textnormal{Link}(\sigma) \to \real$.
The vertex $v$ belongs to the upper link if $h_{\lambda}(v) > h_{\lambda}(\sigma)$, and to the lower link if $h_{\lambda}(v) < h_{\lambda}(\sigma)$.
In the bivariate case for an edge $ab \in M$, the scalar function is defined as $h_{\lambda}(v) = \lambda f_1(v) + f_2(v)$, where $\lambda$ is the negative slope of the line between the points $f(a)$ and $f(b)$, as shown in \cref{fig:jacobi-set-orientation-example}.
%
In practice, this predicate can be computed with an orientation test~\cite{hristovHypersweepsConvectiveClouds2022} using the points $f(a), f(b)$ and $f(v)$ or
using the range-based signed distance field with respect to the image of the line segment $f(ab)$ in the range~\cite{sharmaJacobiSetDriven2022c, tiernyJacobiFiberSurfaces2017}.

\begin{figure}[h]
    \includegraphics[width=.9\columnwidth]{./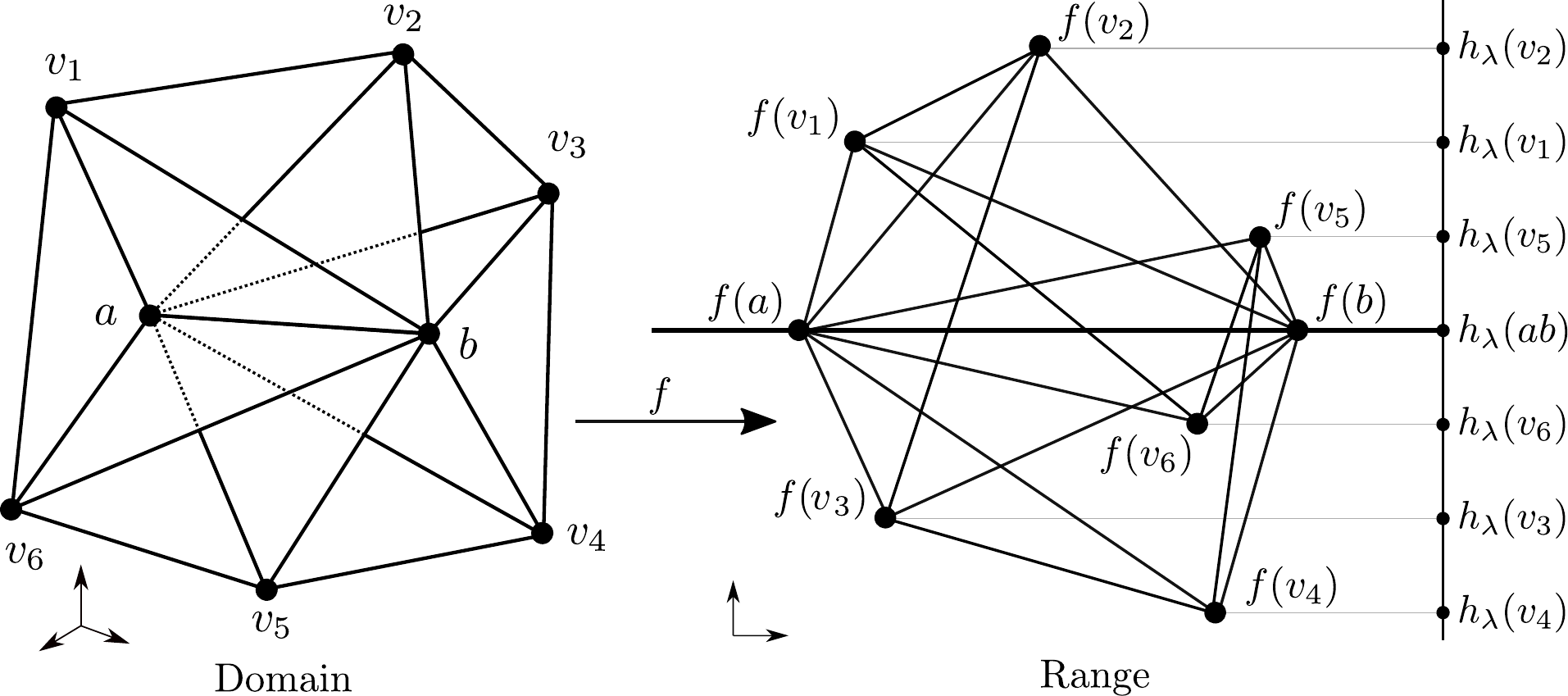}
    \caption{
    Geometric interpretation of the scalar function $h_{\lambda}$ used to define Jacobi edges.
    Vertices mapped above the line defined by $f(ab)$ belong to the upper link of $ab$, and the ones that map below belong to the lower link. 
    }
    \label{fig:jacobi-set-orientation-example}
\end{figure}


Unlike in the scalar case, robustly handling degenerate situations where $h_{\lambda}(v) = h_{\lambda}(ab)$ is considerably more challenging.
To the best of our knowledge, this issue has only been addressed for Jacobi sets of bivariate maps defined over triangulations of 2-manifolds~\cite{natarajanTopologicalAnalysisScalar2004}.
Given a bivariate \pl map $(f_1,f_2): |M| \to \real^2$, the authors use a symbolic perturbation scheme similar to SoS. 
The $\ep$-expansion is given as $f_1(v_i) + \ep_i$ and $f_2(v_i) + \delta_i$, where $v_i$ is the vertex of the triangulation with index $i$.
The order of the symbolic variables is $\ep_1 \gg \dots \gg \ep_n \gg \delta_1 \gg \dots \gg \delta_n$, where $n$ is the number of vertices.
The definition of $h_{\lambda}(v_i)$ is then expanded in terms of the symbolically perturbed coordinates and the necessary if-else statements to robustly resolve the predicate are derived.


\paragraph*{Reeb spaces.}
The Reeb space generalizes the Reeb graph by encoding the connectivity of connected components of level sets of multivariate \pl maps.
These level sets are known as \textit{fibers}~\cite{saekiTopologySingularFibers2004b} and their connected components are called \textit{fiber components}.
In the bivariate case over a triangulation of a 3-manifold, the fibers are generically a set of one-dimensional \pl curves, except when they intersect Jacobi edges. 
Fiber components intersecting Jacobi edges are either isolated points (analogous to local maxima and minima in the scalar case) or collections of \pl curves all meeting at a single point (analogous to saddles in the scalar case).

For a multifield with $\rd$ components, the Reeb space is locally an $\rd$-manifold, except at Jacobi edges, where it becomes singular~\cite{edelsbrunnerReebSpacesPiecewise2008a}.
In the bivariate case, the Reeb space consists of two-dimensional sheets glued together at the Jacobi edges in possibly complicated ways; see~\cref{fig:reeb-space-examples}.
For a comprehensive introduction to Reeb spaces, we refer the reader to the following works~\cite{hristovHypersweepsConvectiveClouds2022, edelsbrunnerReebSpacesPiecewise2008a}.
Throughout this paper, we will only consider 2-dimensional Reeb spaces, although the concept and computation extend to higher dimensions~\cite{edelsbrunnerReebSpacesPiecewise2008a, hristovArrangeTraverseAlgorithm2025}.

There are four algorithms for computing the Reeb space~\cite{edelsbrunnerReebSpacesPiecewise2008a, tiernyJacobiFiberSurfaces2017, hristovArrangeTraverseAlgorithm2025, chattopadhyayAlgorithmCorrectComputation2024}.
Two of them~\cite{edelsbrunnerReebSpacesPiecewise2008a, hristovArrangeTraverseAlgorithm2025} construct an \emph{arrangement} in the range, then proceed in a fully combinatorial manner.
As a result, the geometric predicates they rely on are the same as those for computing the arrangement: lexicographic order, orientation of points, and order of segments.
The other two algorithms~\cite{chattopadhyayAlgorithmCorrectComputation2024, tiernyJacobiFiberSurfaces2017} avoid building the full arrangement, but still use similar geometric predicates.
The first~\cite{chattopadhyayAlgorithmCorrectComputation2024} computes segment intersections and their relative order.
The second~\cite{tiernyJacobiFiberSurfaces2017} also identifies segment intersections to determine where Jacobi fiber surfaces intersect~\cite{chattopadhyayAlgorithmCorrectComputation2024}.
It also requires ordering the intersections in the domain, which corresponds to the order of their respective segment intersections in the range due to the barycentric interpolation of the multivariate \pl map.

The robustness of Reeb space algorithms has not been fully addressed so far, even though it is a crucial step in deploying these algorithms in practice, in which accuracy must be guaranteed.
Robustness with respect to numerical errors is addressed~\cite{hristovArrangeTraverseAlgorithm2025} by computing arrangements with exact geometric computation using CGAL.
However, all four Reeb space algorithms assume a generic input map with no degenerate cases.
Two of them~\cite{hristovArrangeTraverseAlgorithm2025, tiernyJacobiFiberSurfaces2017} apply random numerical perturbations with finite precision arithmetic, which is not guaranteed to resolve all degenerate cases.
Most Reeb space algorithms mention that SoS~\cite{edelsbrunnerSimulationSimplicityTechnique1990a} can be used to ensure that the input map is generic, without providing explicit details. In this paper, we address this gap.

\paragraph*{Generic Bivariate \pl Maps.}
Generalizing the notion of genericity beyond the scalar case is a challenge. 
Although injectivity on the vertex set is a sufficient condition for a scalar \pl function to be generic, it does not eliminate all degenerate cases~\cite{tiernyJacobiFiberSurfaces2017} in multifields.
Furthermore, there is no known \pl analogue of stable maps from fiber topology~\cite{saekiTopologySingularFibers2004b}.
As a result, various definitions of generic PL maps have emerged in the literature, each tailored to the requirements of a specific algorithm.

The first definition of a generic \pl map~\cite{edelsbrunnerReebSpacesPiecewise2008a} is such that \emph{``the images of the vertices have no structural properties that can be removed by arbitrarily small perturbations ...''}.
Although the term \emph{structural properties} is not explicitly defined in that paper, its meaning is understood from the context. 
By considering the Reeb space as representing the topological structural properties of a \pl map, follow-up work formalizes this definition by requiring that no three points are collinear and no three segments are concurrent at their interiors~\cite{hristovHypersweepsConvectiveClouds2022}.
Both definitions imply that $f$ is injective on the edges and triangles of the input mesh because degenerate segments and triangles (see \cref{fig:segment-degenerate}) in the range introduce collinear points. 


This injectivity condition has also been used independently as a definition of genericity for bivariate \pl maps~\cite{chattopadhyayAlgorithmCorrectComputation2024}.
Another injectivity condition used in defining generic bivariate \pl maps requires that the images of no two edges be collinear~\cite{tiernyJacobiFiberSurfaces2017}. 
While these definitions~\cite{chattopadhyayAlgorithmCorrectComputation2024, tiernyJacobiFiberSurfaces2017} are sufficient for those Reeb space algorithms, they do not strictly guarantee that the Reeb space of a generic map remains the same under arbitrarily small perturbations~\cite{hristovHypersweepsConvectiveClouds2022}.

\section{Algebraic Formulation of Required Predicates}
\label{sec:required-predicates}

As discussed in \cref{sec:background}, the predicates required for computing bivariate Jacobi sets and Reeb spaces are lexicographic ordering, identifying upper and lower links of edges, and determining the order of segment intersections.
In this section, we will discuss which algebraic formulation we will use for each predicate.
\revision{Throughout the rest of the paper,} we assume that the input mesh $M$ is a triangulation of a $3$-manifold and that the $\pl$ map $f = (f_1, f_2): |M| \to \real^2$ is bivariate and not necessarily generic.

Unlike previous robust implementations of Jacobi set computation that use the definition of $h_{\lambda}$ to identify upper and lower links of edges, we will use the alternative formulation based on orientation tests (\cref{eqn:point_orientation}).
This choice is motivated by the fact that orientation tests are more widely supported in geometric software libraries, and their robust perturbation techniques are well-studied across all dimensions, including homogeneous coordinates.

The predicate for ordering segment intersections can also be handled using orientation tests. However, this approach requires explicitly computing the coordinates of the intersection points. 
As discussed in \cref{sec:comp-geom}, this is significantly less efficient than parameterizing the segments and directly comparing the values of the scalar parameters.
However, the resulting expression is not a determinant, which is a requirement for most symbolic perturbation schemes~\cite{edelsbrunnerSimulationSimplicityTechnique1990a, emirisEfficientPerturbationsHandling1997, alliezRemovingDegeneraciesPerturbing2000}.
Although this issue can be addressed, as demonstrated in \cref{sec:sos}, we introduce an alternative formulation of this predicate based on orientation tests in the dual projective plane.
Having both formulations enables a comparative evaluation of their correctness and practical efficiency.

\subsection{Order of Segments in Homogeneous Coordinates}
\label{sec:predicates-reeb-space}

An alternative method to determine the intersection order of three segments $s_{1, 2}, s_{3, 4}$, and $s_{5, 6}$ is to use homogeneous coordinates in the dual projective plane.
Given two points in homogeneous coordinates $[p_{1, 1}, p_{1, 2}, 1]$ and $[p_{2, 1}, p_{2, 2}, 1]$, the line passing through them is represented as $\ell_{1, 2} = [p_{1, 2} - p_{2, 2}, p_{2, 1} - p_{1, 1}, p_{1, 1}p_{2, 2} - p_{1, 2}p_{2, 1}]$.
Using the point-line duality in \revision{projective geometry~\cite[Ch.~8.2]{debergComputationalGeometryAlgorithms2008a}}, this line corresponds to a projective point $\ell_{1, 2}^*$ with the same coordinates in the dual plane.
Since this transformation preserves incidence and order, the orientation of the three points in homogeneous coordinates $\ell_{1, 2}^*, \ell_{3, 4}^*$ and $\ell_{5, 6}^*$ in the dual plane encodes the order of the original three segments $s_{1, 2}, s_{3, 4}$ and $s_{5, 6}$.


An example in \cref{fig:segment-order-projective} illustrates this construction.
Let $s_{1, 2}$ be a segment on the line $l_{1, 2}$ in the primal plane, whose dual point in the dual plane is $l^*_{1, 2}$. 
Each point on the segment corresponds to a line in the dual plane that passes through the point $l^*_{1, 2}$.
These lines sweep out two wedges, shown in gray in \cref{fig:sub2-homog} (\revision{see also \cref{fig:lines-wedges}}).
The three intersection points $i_1, i_2$, and $i_3$ on $s_{1, 2}$ map to three of these dual lines denoted as $i^*_1, i^*_2$, and $i^*_3$.
Meanwhile, the lines $\ell_{3, 4}$ and $\ell_{5, 6}$ (containing the segments $s_{3, 4}$ and $s_{5, 6}$) are mapped to the dual points $\ell^*_{3, 4}$ and $\ell^*_{5, 6}$, which lie on the lines $i^*_1$ and $i^*_2$, respectively.
The orientation of the dual points $l_{1,2}^*$ and $l_{3,4}^*$ matches the order of intersection points $i_1$ and $i_2$ along $s_{1,2}$ in the primal plane, which can be evaluated via the sign of the determinant:
\begin{equation}
\triangle_s = 
\det
\begin{pmatrix}
    p_{2,2} - p_{1,2} & p_{1,1} - p_{2,1} & p_{2,1} p_{1,2} - p_{1,1} p_{2,2} \\
    p_{4,2} - p_{3,2} & p_{3,1} - p_{4,1} & p_{4,1} p_{3,2} - p_{3,1} p_{4,2} \\
    p_{6,2} - p_{5,2} & p_{5,1} - p_{6,1} & p_{6,1} p_{5,2} - p_{5,1} p_{6,2}
\end{pmatrix}
\label{eqn:segment_order}
\end{equation}

\begin{figure}[htbp]
    \centering
    \begin{subfigure}[b]{0.49\columnwidth}
        \includegraphics[width=\columnwidth]{./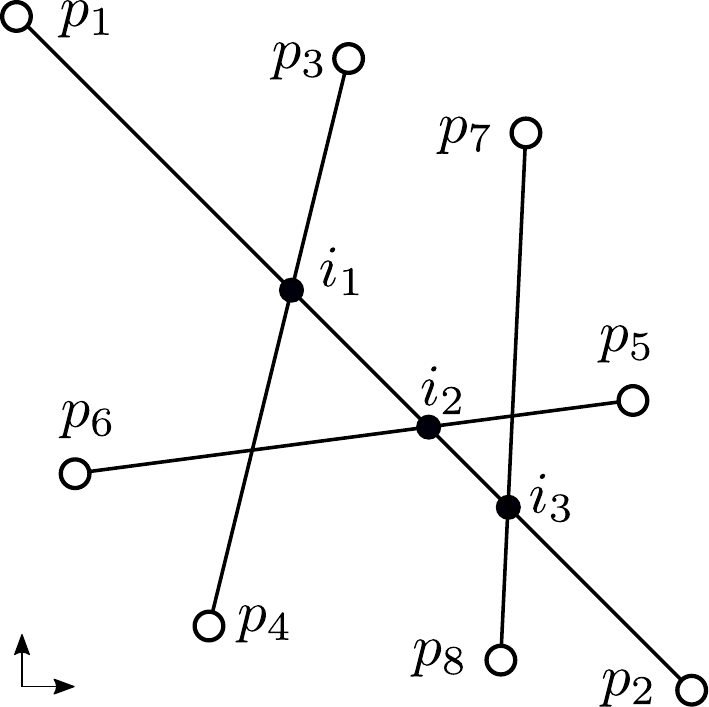}
        \caption{Segments in the plane.}
        \label{fig:sub1}
    \end{subfigure}
    \begin{subfigure}[b]{0.49\columnwidth}
        \includegraphics[width=\columnwidth]{./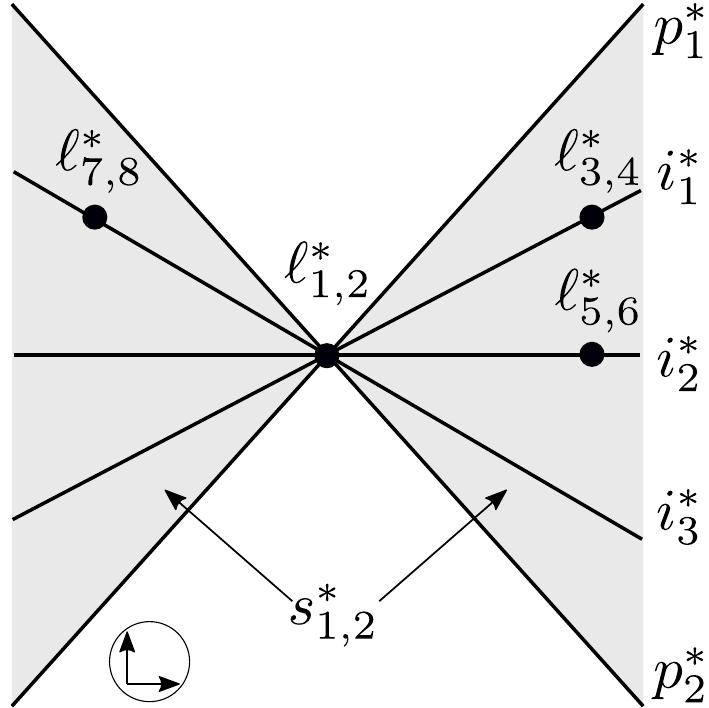}
        \caption{Dual Homogeneous Coordinates.}
        \label{fig:sub2-homog}
    \end{subfigure}
    \caption{
        The order of intersections of four segments $s_{1,2}$, $s_{3,4}$, $s_{5,6}$, and $s_{7,8}$ relative to the first segment $s_{1,2}$ can be computed with an orientation test in the dual projective plane.
        The segment $s_{1,2}$ maps to two wedges that meet at a point, as shown shaded in (b).
        We must flip the sign of the result if the points are not in the same wedge.
    }
    \label{fig:segment-order-projective}
\end{figure}

However, there is no guarantee that the lines corresponding to two segments will map to the same wedge in the dual plane.
For example, the lines of the segments $s_{5,6}$ and $s_{7,8}$ map to two points $l^*_{5, 6}$ and $l^*_{7, 8}$ that lie in different wedges in \cref{fig:segment-order-projective}.
In such cases, the orientation of the dual points is reversed relative to the order of the intersection points in the primal plane.
This can be resolved by two additional sign checks to determine whether the lines are mapped to the same wedge. This can be achieved by substituting their coordinates in the equation of the line given by $p^*_1$.
Geometrically, the wedge to which a line’s dual point maps to is determined by the line's orientation with respect to the line connecting the corresponding intersection point (in the primal plane) to the origin; see \cref{fig:lines-wedges}. 
We refer the reader to \cref{app:duality} for an additional discussion on special cases when one or more of the line segments pass through the origin in the primal plane result in ideal points at infinity in the dual plane.

\section{Symbolic Perturbation of Required Predicates}
\label{sec:schemes}

Having established the algebraic formulations of the predicates for computing Jacobi sets and Reeb spaces in \cref{sec:required-predicates}, we now demonstrate how to perturb a possibly degenerate \pl map $f: |M| \to \real^2$ to a generic \pl map $f_{\ep}$.
To ensure that $f_{\ep}$ is well-defined, we will only perturb the points (images of vertices) of $f$ (see \cref{sec:symbolic-perturb}).
\revision{
We show that this resolves all degenerate cases for points and segments (images of edges), shown in \cref{fig:segment-degenerate}.
We investigate perturbations using Edelsbrunner and M\"ucke's SoS (E)~\cite{edelsbrunnerSimulationSimplicityTechnique1990a} in \cref{sec:sos}, Alliez et al.'s perturbation of the world (A)~\cite{alliezRemovingDegeneraciesPerturbing2000} in \cref{sec:allienz}, and Yap's general symbolic perturbation scheme (YL, YT)~\cite{yapSymbolicTreatmentGeometric1990} in \cref{sec:yap}.
}
Since floating-point arithmetic cannot guarantee robustness for our predicates, we assume the use of exact arithmetic (see \cref{sec:comp-geom}).

\subsection{Simulation of Simplicity (E)}
\label{sec:sos}
In the SoS framework, each point $p_i\in P$ is perturbed via an $\ep$-expansion, defined as $p_i(\ep) = (p_{i, 1} + \ep_{i, 1}, p_{i, 2} + \ep_{i, 2})$ (see \cref{sec:comp-geom}).
This ensures that we can perform lexicographic point comparisons solely based on the first perturbed coordinates because they are unique.
To compare two points $p_i$ and $p_j$, where $i \neq j$, we first compare their unperturbed first coordinates $p_{i, 1}$ and $p_{j, 1}$. If they are equal, we break the ties using their indices $i$ and $j$, analogously to the comparison test widely used in the scalar case.
Orientation tests of three perturbed points are done by evaluating a perturbed determinant, which is discussed in the original SoS paper (see~\cite[Table-ii in the Appendix]{edelsbrunnerSimulationSimplicityTechnique1990a}).
\revision{
We show the number of terms and arithmetic operations in each row of the evaluation table of this scheme in \cref{table:evalution-table-point-orientation} (column E).
}
Next, we discuss the implementation of the segment order predicate through parametrization and dualization.

\subsubsection*{Parametrization}
The expression we obtain from the order of segments predicate with parameterization is not a determinant, unlike the typical setup in SoS.
However, this is not a problem if we can verify two properties that are guaranteed when we expand a determinant into an expression with SoS.
The first property is that at least one $\ep$-product has a constant coefficient.
This ensures that the predicate never evaluates to zero.
The second property is that no $\ep$-variable is raised to a power other than zero or one.
This ensures that they have a well-defined total ordering.
To verify these properties, we can explicitly expand the expression, but first we need to rearrange it to eliminate the denominator.

To expand the expression, we substitute all vectors with their perturbed counterparts, as described in \cref{sec:comp-geom}.
The resulting expression is $\frac{\mathbf{\qv(\ep) \times \sv(\ep)}}{\rv(\ep) \times \sv(\ep)} - \frac{\mathbf{\qpv(\ep) \times \spv(\ep)}}{\rv(\ep) \times \spv(\ep)}$.
To eliminate the denominator, we multiply both terms by their common denominator $(\rv(\ep) \times \spv(\ep))(\rv(\ep) \times \sv(\ep))$.
Although this changes the value of the expression, we are only concerned with its sign, which remains the same as long as the denominator is positive. Therefore, we must track the sign of both the denominator and the resulting expression:
$$(\qv(\ep) \times \sv(\ep))(\rv(\ep) \times \spv(\ep)) - (\qpv(\ep) \times \spv(\ep))(\rv(\ep) \times \sv(\ep)).$$

When expanded, the final expression has $768$ terms, making the evaluation of its sign non-trivial.
For this reason, we use an automatic evaluation procedure (\cref{alg:polynomial-sign-evaluation}) whose description we postpone to \cref{sec:polynomial-sign} for clarity of exposition.
The evaluation table of the final expression has a maximum depth of $34$.
We compare the number of terms and arithmetic operations (addition and multiplication) of each row of the evaluation tables of all perturbation schemes in \cref{table:evalution-table-segment-order} (column E).
Finally, we confirm that no $\ep$-variable in the expression has an exponent different than zero or one, and that there is an $\ep$-product that has a constant coefficient.

\subsubsection*{Dualization}
The second approach we introduced in \cref{sec:required-predicates} for implementing the segment order predicate relies on orientation tests in the dual projective plane. This method fits naturally within the SoS framework, as it can be formulated as a determinant. However, unlike standard orientation predicates, we cannot use the evaluation tables provided by Edelsbrunner and Mücke~\cite{edelsbrunnerSimulationSimplicityTechnique1990a}, since the determinant in this case does not directly involve the original input points.
We can still use \cref{alg:polynomial-sign-evaluation} to automatically determine the sign of the determinant of the matrix described in \cref{eqn:segment_order} using the perturbed coordinates. 

Similar to the parametrized formulation, we need to compute additional signs to decide whether to flip the sign of the final result.
This depends on whether the lines of the two segments are mapped to the same wedge in the dual plane (\cref{sec:predicates-reeb-space}).
We also need to check for the signs that account for normalizing the points in homogeneous coordinates~\cite{edelsbrunnerSimulationSimplicityTechnique1990a}.

The fully expanded expression for the dualized formulation of the segment order predicate contains 768 terms, and its evaluation table reaches a maximum depth of 34 as shown in \cref{table:evalution-table-segment-order}. These are exactly the same figures as those obtained from the parametrization-based formulation of the predicate.
When comparing the two expanded expressions and their corresponding evaluation tables, we observe that they are identical.

\begin{table}[ht]
\centering
\scalebox{0.8}{
\begin{tabular}{crrrrrrrr}
\toprule
            & \multicolumn{4}{l}{\textbf{Number of Terms}}                                                                               & \multicolumn{4}{l}{\textbf{Number of Operations}}                                                               \\ 
\midrule
\textbf{Depth} & \textbf{A} & \textbf{E} & \textbf{YL} & \textbf{YT} & \textbf{A} & \textbf{E} & \textbf{YL} & \textbf{YT} \\ 
\midrule
0           & 6         & 6         & 6          & 6          & 11       & 11        & 11         & 11         \\ 
1           & 18        & 2         & 2          & 2          & 47       & 1         & 1          & 1          \\ 
2           & 12        & 2         & 2          & 2          & 35       & 1         & 1          & 1          \\ 
\midrule
3           &         & 2         & 2          & 2          &        & 1         & 1          & 1          \\ 
4           &         & 1         & 1          & 2          &        & 0         & 0          & 1          \\ 
\midrule
5           &         &          &           & 2          &        &          &           & 1          \\ 
6           &         &          &           & 2          &        &          &           & 1          \\ 
7           &         &          &           & 1          &        &          &           & 0          \\
\midrule
Non-zero constant           &            &    1      &  -1         & -1          &            &          &           &           \\
\bottomrule
\end{tabular}
}
\caption{
    Number of terms and arithmetic operations for each row in the evaluation tables of each symbolic perturbation scheme for the point orientation predicate.
    The perturbation schemes are labeled according to the initial of their first author: A~\cite{kettnerClassroomExamplesRobustness2008}, E~\cite{edelsbrunnerSimulationSimplicityTechnique1990a}, YL~\cite{yapSymbolicTreatmentGeometric1990} with lexicographic ordering and YT~\cite{yapSymbolicTreatmentGeometric1990} with total ordering.
}
\label{table:evalution-table-point-orientation}
\end{table}

\subsection{Perturbing the World (A)}
\label{sec:allienz}

Next, we discuss the symbolic perturbation scheme called perturbing the world, which is specifically designed for line segments~\cite{alliezRemovingDegeneraciesPerturbing2000}.
This scheme uses a single global epsilon $\ep$ and the epsilon expansion is given as $p_i(\ep) = \big(p_{i, 1} + \ep p_{i, 2}, p_{i, 2} + \ep^2 p_{i, 1} + \ep^3(p_{i, 1}^2 + p_{i, 2}^2)\big)$.
It is important to note that this scheme assumes that the input contains no duplicate points.
Indeed, from the structure of the epsilon expansion, a lexicographic comparison would be unable to distinguish between two points that share the same coordinates.
The paper~\cite{alliezRemovingDegeneraciesPerturbing2000} gives details about evaluating the point orientation predicate, and we can use \cref{alg:polynomial-sign-evaluation} to automatically generate the evaluation table for the segment order predicate.
As we have seen previously, both formulations of that predicate produce the same final expression.
When we generate the evaluation table from the final expression, the ordering of all epsilon products is $\ep, \ep^2, \dots, \ep^n$, where $n$ is the maximum degree of any epsilon in the expression.
This scheme produces significantly more terms than the other two schemes, as can be seen in column A of \cref{table:evalution-table-point-orientation} and \cref{table:evalution-table-segment-order}.

\begin{table}[!ht]
\centering
\scalebox{0.8}{
\begin{tabular}{crrrrrrrr}
\toprule
            & \multicolumn{4}{l}{\textbf{Number of Terms}}                                 & \multicolumn{4}{l}{\textbf{Number of Operations}}                \\ 
\midrule
\textbf{Depth} & \textbf{A} & \textbf{E} & \textbf{YL} & \textbf{YT} & \textbf{A} & \textbf{E} & \textbf{YL} & \textbf{YT} \\  
\midrule
0   & 48  & 48  & 48 & 48 & 191  & 191 & 191 & 191 \\
1   & 240 & 16  & 16 & 16 & 1199 & 47  & 47  & 47  \\
2   & 192 & 16  & 16 & 16 & 959 & 47  & 47  & 47  \\
3   & 432 & 16  & 16 & 16 & 2303 & 47  & 47  & 47  \\
4   & 576 & 4   & 8  & 16 & 3263 & 7   & 15  & 47  \\
5   & 192 & 6   & 16 & 16 & 1151 & 11  & 47  & 47  \\
\midrule
6   &     & 16  & 8  & 16 &      & 47  & 15  & 47  \\ 
$\vdots$ & & $\vdots$ & $\vdots$ & $\vdots$ & & $\vdots$ & $\vdots$ & $\vdots$ \\ 
34  &     & 1   & 6  & 6  &      & 0   & 11  & 11  \\
\midrule
35  &     &     & 4  & 4  &      &     & 7   & 7   \\ 
$\vdots$ & & & $\vdots$ & $\vdots$ & & & $\vdots$ & $\vdots$ \\ 
59  &     &     & 1  & 4  &      &     & 0   & 7   \\ 
\midrule
60  &     &     &    & 6 &      &     &     & 11  \\ 
$\vdots$ & & &  & $\vdots$ & & &  & $\vdots$ \\ 
163 &     &     &    & 1  &      &     &     & 0   \\ 
\midrule
Non-zero constant & & 1 & -1 & -1 & & & & \\
\bottomrule
\end{tabular}
}
\caption{
    Number of terms and arithmetic operations for each row in the evaluation tables of each symbolic perturbation scheme for the segment order predicate.
    The perturbation schemes are labeled according to the initial of their first author: A~\cite{kettnerClassroomExamplesRobustness2008}, E~\cite{edelsbrunnerSimulationSimplicityTechnique1990a}, YL~\cite{yapSymbolicTreatmentGeometric1990} with lexicographic ordering, and YT~\cite{yapSymbolicTreatmentGeometric1990} with total ordering.
}
\label{table:evalution-table-segment-order}
\end{table}

\subsection{General Symbolic Scheme (YL, YT)}
\label{sec:yap}

The general symbolic scheme proposed by Yap~\cite{yapGeometricConsistencyTheorem1990, yapSymbolicTreatmentGeometric1990} does not require an epsilon expansion and relies on symbolic differentiation, which is present in most symbolic mathematics libraries.
The only prerequisite to using this scheme is to choose an admissible ordering of the power products of partial derivatives.
Two common admissible orderings are the lexicographic ordering (YL) and the total degree ordering (YT), where power products are first sorted by their total degree and then lexicographically.
When applied to the lexicographic order predicate, this scheme produces the same result as SoS --- ties in the first coordinate are broken by its index.

We show the difference in the number of terms and arithmetic operations produced by the two admissible orderings for the orientation of points predicate in \cref{table:evalution-table-point-orientation} and for the order of segments predicate in \cref{table:evalution-table-segment-order}.
In the evaluation table for the lexicographic ordering (YL), rows with fewer terms appear earlier because higher-order derivatives are evaluated before lower-order partial derivatives.
As a result, simpler expressions tend to appear earlier in the table, which is advantageous for predicates that can be resolved without requiring deep evaluations.
The lexicographic ordering approach also has a smaller maximum depth because partial derivatives of higher order are evaluated first.


\section{Genericity}
\label{sec:genericity}
The results from the previous section allow us to perturb any bivariate \pl map $f$ to a generic \pl map $f_{\ep}$.
We will use the definition of genericity that requires that no three points are collinear and no three segments are concurrent in their interiors~
\cite{hristovHypersweepsConvectiveClouds2022}.
As discussed in \cref{sec:multifield-topo}, this implies all other formulations of genericity of bivariate \pl maps~\cite{edelsbrunnerReebSpacesPiecewise2008a, chattopadhyayAlgorithmCorrectComputation2024, tiernyJacobiFiberSurfaces2017}.

\begin{lemma}
Let $M$ be a triangulation of a $3$-manifold and let $f : |M| \to \real^2$ be a bivariate \pl map.
The symbolic perturbation $f_{\ep}$ of $f$ with either Simulation of Simplicity or general symbolic perturbation is a generic \pl map.
\end{lemma}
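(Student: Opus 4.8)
The plan is to show that $f_{\ep}$ avoids both kinds of degeneracy — three collinear points and three segments concurrent in their interiors — by reducing each to the non-vanishing of a suitable predicate polynomial under the symbolic perturbation. For collinearity of $p_i(\ep), p_j(\ep), p_k(\ep)$, the relevant quantity is the orientation determinant $\triangle_p$ of \cref{eqn:point_orientation} evaluated at the perturbed coordinates; for concurrency of three segments in their interiors, the relevant quantity is (essentially) the segment-order determinant $\triangle_s$ of \cref{eqn:segment_order} — three segments meet at a common interior point precisely when their dual points $\ell^*$ are collinear, i.e. $\triangle_s = 0$, together with the interior-membership checks. So it suffices to argue that, after perturbation, neither determinant can be identically zero, and then invoke the standard property of each symbolic scheme that a not-identically-zero polynomial has a well-defined nonzero perturbed sign.

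For SoS, I would recall the two structural properties already isolated in \cref{sec:sos}: the $\ep$-expanded polynomial must contain an $\ep$-product with a constant (nonzero) coefficient, and no $\ep$-variable may appear with exponent other than $0$ or $1$. For $\triangle_p$ this is exactly the content of Edelsbrunner–Mücke's original analysis of the orientation predicate (their Table-ii), so I can cite it directly. For $\triangle_s$ — equivalently the parametrized segment-order expression, which \cref{sec:sos} shows expands to the same $768$-term polynomial — the automatic evaluation procedure of \cref{alg:polynomial-sign-evaluation} was run and explicitly verified both properties (a constant-coefficient $\ep$-product exists, and every $\ep$ has exponent $\le 1$); I would simply point to that computation. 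Hence the perturbed signs of both determinants are always nonzero, so $f_{\ep}$ has no collinear triples and no interior-concurrent segment triples.

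For Yap's general symbolic scheme the argument is even more direct: the scheme assigns a nonzero perturbed sign to \emph{any} polynomial that is not the zero polynomial, since some partial derivative along the admissible ordering is eventually a nonzero constant. Thus I only need that $\triangle_p$ and $\triangle_s$, viewed as polynomials in the $2n_0$ coordinate variables, are not identically zero — which is immediate, as both have monomials with nonzero coefficients (again visible from the expansions computed in \cref{sec:sos}). This covers the YL and YT variants uniformly. A small remark is needed that the interior-membership sign checks (the wedge-flip checks of \cref{sec:predicates-reeb-space}, and endpoint-sharing exclusions) are themselves orientation-type predicates that the same schemes resolve consistently, so "concurrent in the interiors" is decided without ambiguity.

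The main obstacle is the bookkeeping that ties the \emph{geometric} genericity conditions to the \emph{algebraic} non-degeneracy of these specific predicates — in particular, making precise that the perturbation $p_i(\ep)=(p_{i,1}+\ep_{i,1},\,p_{i,2}+\ep_{i,2})$ is applied consistently to \emph{all} occurrences of a vertex's coordinates (so that an edge and its image $s_{i,j}$ are perturbed coherently and $f_{\ep}$ really is a well-defined PL map, as flagged at the start of \cref{sec:schemes}), and that three segments sharing a common endpoint are excluded from the concurrency condition by design rather than by the predicate. Everything else is either a citation (Edelsbrunner–Mücke for $\triangle_p$, Yap for the general scheme) or a pointer to the verified output of \cref{alg:polynomial-sign-evaluation} for $\triangle_s$; no fresh computation is required in the proof itself.
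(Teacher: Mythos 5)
Your proposal is correct and follows essentially the same route as the paper: it reduces the genericity conditions (no three collinear points, no three segments concurrent in their interiors) to the non-vanishing of the perturbed orientation and segment-order predicates, and then appeals to the results of \cref{sec:schemes} --- the evaluation tables terminating in a nonzero constant (for SoS) and the guaranteed nonzero perturbed sign of a not-identically-zero polynomial (for Yap's scheme). Your additional bookkeeping about well-definedness of $f_{\ep}$ and the wedge-flip checks is more detail than the paper's proof records, but it does not change the argument.
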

\begin{proof}
    As we have shown in \cref{sec:schemes} the lexicographic order, orientation of points, and order of segments predicates never evaluate to zero.
    This is evident from \cref{table:evalution-table-point-orientation} and \cref{table:evalution-table-segment-order} where the predicates reach a non-zero constant term in their evaluation tables.
    Therefore, no three points are collinear and no three segments are concurrent in their interiors. So, $f_{\ep}$ is generic.
\end{proof}

\section{Sign Evaluation of Symbolic Polynomials}
\label{sec:polynomial-sign}





Evaluating the perturbed expressions of the parametrized and dualized formulations of the order of segments predicate is challenging because they involve points constructed from the original input.
As we have shown in \cref{sec:schemes} this leads to symbolic expressions with hundreds of terms, which are impractical to factor manually since it is not straightforward to express them as sums of subdeterminants.
In this section, we demonstrate how this factorization can be automated for the SoS and perturbing the world schemes.
We provide a concrete example at the end of the section.
Note that the general symbolic scheme does not use an explicit symbolic perturbation and can be resolved by directly differentiating the expression of the predicate in terms of the original coordinates.

Consider an algebraic expression $q$ used to evaluate a particular geometric predicate.
Typically, $q$ is the expansion of a determinant, but here we consider the general case where it is a multivariate polynomial in the original coordinates.
Let $q_{\ep}$ be the multivariate polynomial obtained by evaluating $q$ on the $\ep$-expansion of the input coordinates.
We assume that a total ordering is defined on all $\ep$-products in $q_{\ep}$.
Instead of manually factoring $q_{\ep}$, we use computer algebra systems such as Mathematica~\cite{mathematica} or SymPy~\cite{meurer2017sympy} to automate the steps for producing the relevant evaluation tables.


For each $\ep$-product $w$ in the ordered sequence of all $\ep$-products we define $T_w$ to be the sum of all terms in $q_{\ep}$ which contain exactly $w$ and no other $\ep$-variables.
Then $T_w = c \times w$, where $c$ is an expression that only involves the original point coordinates.
Note that the first $\ep$-product $w=1$ contains no $\ep$-variables and corresponds to evaluating the predicate on the non-perturbed coordinates.
If $c$ is non-zero, we return its sign.
Otherwise, we proceed down the list until a non-zero $c$ is found.

A non-zero coefficient $c$ is guaranteed to be found if there is at least one $\ep$-product in $q_{\ep}$ that is multiplied by a constant.
Once such a term is found, there is no need to evaluate further terms, as they are guaranteed to be orders of magnitude smaller.
Therefore, our procedure can be used to evaluate predicates more general than those expressible as determinants~\cite{edelsbrunnerSimulationSimplicityTechnique1990a}.
For determinant predicates, we can compute the table of relevant determinants produced by the SoS method by listing all coefficients $c$ in order.
We will refer to the table produced by our method more generally as an \emph{evaluation table} 
since it does not need to include any subdeterminant.
We illustrate this procedure in \cref{alg:polynomial-sign-evaluation}.

\begin{algorithm}
\caption{Evaluation Table of a Symbolic Polynomial}
\label{alg:polynomial-sign-evaluation}
\begin{algorithmic}[]
    \Procedure{ComputeEvaluationTable}{$q_{\ep}$}
    \State $W \gets $ ordered list of all $\ep$-products in $q_{\ep}$.
    \State $E \gets [\ ]$ \Comment{Evaluation Table}
    \For{$w \in W$}
        \State $T_w \gets \textnormal{all terms of $q_{\ep}$ that contain exactly }w$
        \State $c \gets T_w / w$ 
        \State $E$.add($c$)
        \If{$c \textnormal{ is a constant }$}
            \State \textbf{return} $E$
        \EndIf
    \EndFor
    \EndProcedure
\end{algorithmic}
\end{algorithm}

Finally, we present an example.
Let $p_i(\ep) = (p_{i, 1} + \ep_{i, 1}, p_{i, 2} + \ep_{i, 2})$ and $p_j(\ep) = (p_{j, 1} + \ep_{j, 1}, p_{j, 2} + \ep_{j, 2})$ be the $\ep$-expansions of two points with SoS.
Suppose that we have an expression for a predicate $q_{\ep}$ where the terms are ordered and grouped by their $\ep$-product ordering.
Let $q_{\ep}$ = ($p_{i, 1} + p_{j, 2}) + (p_{i, 1}\ep_{i, 1} + p_{i, 2}\ep_{i, 1}) + \revision{10\ep_{i, 1}\ep_{j, 2}}  + p_{i, 1}\ep_{j, 2}$.
First, we evaluate $p_{i, 1} + p_{j, 2}$.
If that is zero, we then evaluate $p_{i,1} + p_{i,2}$, and return its sign if it is nonzero.
Finally, $\ep_{i, 1}\ep_{j, 2}$ has a constant coefficient of $10$, so we return a positive sign.
We do not evaluate the coefficient of $\ep_{j,2}$, since $\ep_{i,1}\ep_{j,2} \gg \ep_{j,2}$.

\section{Implementation}
\label{sec:implementation}


We implemented all predicates and perturbation schemes discussed \cref{sec:schemes}, as well as \cref{alg:polynomial-sign-evaluation}, using the SymPy symbolic mathematics package~\cite{meurer2017sympy} in Python.
Symbolic variables were used to represent the $\ep$-expansion of all coordinates, utilizing SymPy's built-in capabilities for factoring, simplifying, and differentiating symbolic expressions.
We used Python’s \texttt{itertools} library to generate orderings of $\ep$-products and partial derivatives.
The output of our code is a set of evaluation tables for all required geometric predicates.
\revision{
Our code is freely available on \href{https://github.com/peter-hristov/symbolic-perturbation-for-bivariate-computational-topology}{GitHub}~\cite{hristovSymbolicPerturbation2025}.
}

The evaluation tables for each predicate can be used to compute perturbed signs by substituting numerical values for the symbolic coordinates.
However, doing this directly in Python is slow.
In order to generate high-performance C++ code, we advise the use of SymPy's built-in \texttt{ccode} functionality.
The rows of each evaluation table can be organized into numerous cases in a \texttt{switch} statement and exported as C++ functions in a header file.
This header file can then be integrated with existing code for computing Jacobi sets of the Reeb spaces in order to handle the predicates robustly.


\begin{table*}[h]
    \centering
    \scalebox{0.8}{
    \begin{tabular}{lcrrrrrr}
        \toprule
        \textbf{Dataset} & 
        \textbf{Dimensions} & 
        \textbf{Tetrahedra} & 
        \textbf{Vertices} & 
        \textbf{Edges} & 
        \textbf{Duplicates} & 
        \textbf{Collinear Points} & 
        \textbf{Concurrent Segments} \\
        \midrule
        engine       & 255×255×109 & 35,438,625 & 7,208,960  & 42,888,814  & 5,884,231   & 1,050,940,003& 57    \\
        enzo         & 256×256×256 & 82,906,875 & 16,777,216 & 100,074,240 & 86,344      & 9,832,907    & 0     \\
        ethane-diol  & 115×116×134 & 8,718,150  & 1,787,560  & 10,592,843  & 92          & 1            & 0     \\
        isabel1      & 125×125×100 & 7,611,120  & 1,562,500  & 9,253,475   & 6,631       & 537,992      & 0     \\
        scission     & 40×40×66    & 494,325    & 105,600    & 613,106     & 104,665     & 412,777,680  & 16,280\\
        tooth        & 103×94×161  & 7,588,800  & 1,558,802  & 9,228,973   & 5,255       & 35,780,055   & 1     \\
        \bottomrule
    \end{tabular}
    }
    \caption{
        Number of degenerate cases found for each dataset using random sampling with $n = 10^{10}$.
        Duplicates refer to the number of vertices that map to the same point in the range as another vertex.
        Collinear refers to the number of collinear triplets of points in the range for $n$ samples of triplets.
        Collinear refers to the number of collinear triplets of segments in the range for $n$ samples of triplets. 
    }
    \label{tab:geom-full}
\end{table*}


\section{Evaluation}
\label{sec:evaluation}

So far, we have established three important things about the symbolic perturbation of our predicates.
The first is that both the parametrized and dualized formulations of the order of segments predicate have the same expression.
The second is that the general symbolic scheme could be more efficient in practice when implemented with the lexicographic ordering, rather than with the total ordering, because the earlier rows in the table have fewer terms.
The third is that the perturbation scheme that perturbs the world is less practical because it requires significantly more arithmetic operations than the other two schemes.

The goal of this section is to evaluate the practical performance of our proposed predicates with two experiments.
First, we would like to understand how often degenerate cases occur in real-world datasets.
This gives us an idea of how often perturbation is required to evaluate them.
Our second goal is to determine how many terms and arithmetic operations are needed on average to resolve degenerate cases.
In order to estimate this, we run our predicates on synthetically generated degenerate cases.

\subsubsection*{Real-world Data}
We have chosen multivariate datasets widely used in the scientific visualization community~\cite{carrFiberSurfacesGeneralizing2015, jankowaiFeatureLevelSetsGeneralizing2019}.
All datasets are sampled on the vertices of a regular cubic grid, which we then subdivide using five tetrahedra per cube in ParaView~\cite{Ahrens2005ParaView}.
In the first five columns of \cref{tab:geom-full} we list the original grid resolution and the number of simplices in the resulting tetrahedral meshes.
In the engine, scission, and tooth datasets, one of the scalar fields is the gradient magnitude of the primary scalar field. 

To determine the number of degenerate cases for each dataset, we would need to count the number of duplicate points, three or more collinear points, and three or more segments that are concurrent in their interiors (see \cref{fig:segment-degenerate}).
Aside from counting duplicate points, which can be done efficiently with a hash map, it is infeasible to count the exact number of degenerate cases for the other two predicates.
This is because the number of triplets for up to 16 million vertices and 100 million edges is on the order of $10^{17} \textnormal{ to } 10^{19}$.
Generating all such triplets is impractical and using memorization to test pairs instead of triplets requires terabytes of memory. 

In our experiment, we estimate the number of degenerate cases by generating $10^{10}$ random triplets of vertices and edges.
We then used CGAL~\cite{cgal-manual} to test for collinearity and concurrency in the corresponding points and segments using an exact kernel.
As shown in \cref{fig:segment-order}, the number of duplicate and collinear points is significant in some datasets, particularly those where one scalar field is the gradient of the other.
In contrast, there are far fewer degenerate cases for concurrent segments. 
This can be explained by the fact that, in such cases, six points (the segment endpoints) must align, rather than just two or three.
Although we report zero degenerate cases involving concurrent segments in three of the datasets, we cannot guarantee that none exist.
We may simply have not sampled them.

We can confirm that degenerate cases of all types (see \cref{fig:segment-degenerate}) do occur in real-world bivariate datasets.
\revision{
In particular, highly correlated scalar fields are more likely to have similarities that could lead to degenerate cases.
}
Although degenerate configurations are, in a sense, local, they can affect the global combinatorial structure of Jacobi sets and Reeb spaces.
This clearly demonstrates the need for robust predicates introduced in this paper.

\subsubsection*{Synthetic Data}

Having established that all degenerate cases occur in real-world datasets, we now evaluate how efficiently they can be resolved using the perturbation schemes proposed in \cref{sec:schemes}.
For this evaluation, we used synthetically generated degenerate cases.
For collinear points, we randomly generated three rational points along the $y=x$ line and then applied a random rotation and translation.
For concurrent segments, we used three random rational numbers $t_1, t_2$ and $t_3$ to generate three rational points on the unit circle with a stereographic projection, where $x_i = (1 - t_i^2)/(1 + t_i^2)$ and $y_i = (2t_i)/(1 + t_i^2)$ for $i \in \{1, 2, 3\}$.
Taking each point and its antipodal results in three concurrent segments, which we then randomly translated and rotated.
Using rational numbers in all cases is crucial to ensure robustness (see \cref{sec:comp-geom}).

We evaluated Yap's general scheme~\cite{yapSymbolicTreatmentGeometric1990} with both lexicographic (YL) and total ordering (YT), as well as the SoS scheme~\cite{edelsbrunnerSimulationSimplicityTechnique1990a}.
We did not evaluate the perturb the world scheme, because, as we saw in \cref{sec:schemes}, it produces significantly more terms.
The results of our experiments are shown in \cref{table:depth-operations} and \cref{table:depth-histogram}.
In \cref{table:depth-operations}, we report statistics on the depth and number of arithmetic operations across one million randomly generated degenerate cases, for both collinear points and concurrent segments.
Arithmetic operations were counted as the number of additions and multiplications present in each symbolic expression.
In practice, addition and multiplication may have different costs, depending on library-specific implementation details.
All computations were performed using SymPy's rational types to ensure exact arithmetic.

The performance of all perturbation schemes is almost identical, as we can see from \cref{table:depth-operations}.
The average number of operations and standard deviation differ by very small margins.
Overall, SoS uses the fewest arithmetic operations, closely followed by the general symbolic scheme with total and lexicographic orderings.
Interestingly, lexicographic ordering does not perform better than total ordering as predicted.
The smaller maximum depth of the lexicographic ordering is never reached as we can see from \cref{table:depth-histogram}. 
Furthermore, larger expressions earlier in the table, which we see with total ordering, may actually be beneficial, since they can help resolve a case with less depth.
Finally, we demonstrate in \cref{table:depth-histogram} that $99.9\%$ of all degenerate cases are resolved with only one additional evaluation (depth one).
This demonstrates that the schemes we have proposed are efficient and do not introduce a substantial overhead.

\begin{table}[ht]
    \centering
    \scalebox{0.8}{
    \begin{tabular}{ccrrrr}
        \toprule
        \multicolumn{2}{c}{} & 
        \multicolumn{2}{c}{\textbf{Depth}} & 
        \multicolumn{2}{c}{\textbf{\#Operations}} \\
        \cmidrule(lr){3-4} \cmidrule(lr){5-6}
        \textbf{Predicate} & \textbf{Scheme} & 
        \textbf{Mean} & \textbf{SD} & 
        \textbf{Mean} & \textbf{SD} \\
        \midrule
        \multirow{3}{*}{Orient} 
            & YL & 1.001 & 0.001 & 11.001 & 0.030 \\
            & YT & 1.001 & 0.001 & 11.001 & 0.030 \\
            & E  & 1.001 & 0.001 & 11.001 & 0.030 \\
        \midrule
        \multirow{3}{*}{Order} 
            & YL & 1.001 & 0.040 & 191.051 & 1.697 \\
            & YT & 1.001 & 0.035 & 191.051 & 1.646 \\
            & E  & 1.001 & 0.032 & 191.046 & 1.488 \\
        \bottomrule
    \end{tabular}
    }
    \caption{
        Mean and standard deviation (SD) of depth and number of arithmetic operations by predicate and scheme.
    }
    \label{table:depth-operations}
\end{table}

\begin{table}[ht]
    \centering
    \scalebox{0.8}{
    \begin{tabular}{ccrrrrrrr}
        \toprule
        \multicolumn{2}{c}{} & 
        \multicolumn{7}{c}{\textbf{Depth Count}} \\
        \cmidrule(lr){3-9}
        \textbf{Predicate} & \textbf{Scheme} & 
        \textbf{1} & \textbf{2} & \textbf{3} & \textbf{5} & \textbf{6} & \textbf{7} & \textbf{14} \\
        \midrule
        \multirow{3}{*}{Orient} 
            & YL & 999,087 & 913 & 0 & 0 & 0 & 0 & 0 \\
            & YT & 999,087 & 913 & 0 & 0 & 0 & 0 & 0 \\
            & E  & 999,087 & 913 & 0 & 0 & 0 & 0 & 0 \\
        \midrule
        \multirow{3}{*}{Order} 
            & YL & 998,961 & 1027  & 0  & 0  & 0 & 11 & 1 \\
            & YT & 998,961 & 1027  & 0  & 11 & 1 & 0  & 0 \\
            & E  & 999,030 & 959   & 11 & 0  & 0 & 0  & 0 \\
        \bottomrule
    \end{tabular}
    }
    \caption{        
        Depth counts by predicate and scheme.
    }
    \label{table:depth-histogram}
\end{table}

\section{\revision{Discussion and Limitations}}
\label{sec:discussion}

In our evaluation in \cref{sec:evaluation} we have confirmed that degenerate cases do occur in real-life data and that they can be handled robustly without substantial overhead.
However, symbolic perturbation has faced multiple concerns in the computational geometry literature~\cite{schornDegeneracyGeometricComputation1994, seidelNatureMeaningPerturbations1998}.
First, it complicates the debugging process during algorithm development.
For example, visualizing the perturbed arrangement of segments becomes increasingly difficult.
Second, we compute the topological structure of the perturbed data, not the original.
\revision{
This is an important distinction, even if we assume that real-world data contains noise, because, to the best of our knowledge, there are no stability results that allow us quantify the effects of a perturbation for Jacobi sets and Reeb spaces.}
Furthermore, perturbation introduces artificial features and there are currently no established simplification methods that can eliminate them.
Our approach also relies on exact arithmetic, and its overall impact on algorithm performance \revision{requires further investigation}.






Modern computational geometry libraries such as CGAL~\cite{cgal-manual} treat degenerate cases as legitimate parts of the input and handle them explicitly, without relying on perturbation.
This approach is generally faster and conceptually cleaner, but only when all possible degenerate cases are known and can be classified.
It remains to be seen whether classifying all degenerate cases for Reeb spaces and Jacobi sets is tractable, even in the bivariate case.

\revision{
Finally, note that our approach applies only to bivariate \pl maps.
For higher-dimensional \pl maps, supporting the geometric predicates required for higher-dimensional geometric arrangements will be necessary.
In other cases, such as cubic meshes with a trilinear interpolant, the set of Jacobi edges is not guaranteed to be a subset of the domain’s edges.
However, this is currently not an issue since all Jacobi set and Reeb space algorithms assume a \pl map.
}

\section{Conclusion}
\label{sec:conclusion}

In this paper, we presented both the theoretical foundation and practical implementation of \revision{robust predicates} for bivariate Jacobi sets and Reeb spaces algorithms.
We surveyed relevant work in computational geometry on robust geometric predicates and highlighted the unique challenges these predicates face in multivariate computational topology, particularly the need for exact geometric computation and the emergence of complex expressions under symbolic perturbation.
To handle these nontrivial expressions, we introduced a general procedure for evaluating them automatically, rather than manually, as is typically done in the literature.

We demonstrated that all predicates required for computing Jacobi sets and Reeb spaces can be handled robustly using various symbolic perturbation schemes, including, but not limited to, the well-known Simulation of Simplicity.
Then, we assessed the practical performance of these schemes and confirmed that degenerate cases do, in fact, arise in real-world bivariate datasets. However, we also demonstrated that most degenerate cases are resolved at very shallow depths requiring only a few additional operations. 
Finally, we outlined a procedure for generating efficient C++ code to evaluate these predicates without using any symbolic arithmetic.

\acknowledgments{
   This work was partially supported by the Wallenberg Autonomous Systems and Software Program (WASP) funded by the Knut and Alice Wallenberg Foundation; the Swedish e-Science Research Center (SeRC); and the Swedish Research Council (VR) grant 2023-04806.
}

\bibliographystyle{abbrv-doi}

\bibliography{template}

\appendix

\newpage

\section{Background on \pl Topology}
\label{app:background-pl}

\revision{
In this appendix we review some foundational definitions from \pl topology.
For a comprehensive introduction, we refer the reader to relevant textbooks~\cite{edelsbrunnerComputationalTopology2009, rourke2012introduction}.
}

\revision{
A \emph{simplex} $\sigma$ of dimension $\sd$ is a convex combination of $\sd$ affinely independent points in $\real^{\sd+1}$.
The points that define $\sigma$ are called its \emph{vertices}, and we say that a simplex is \textit{spanned} by its vertices.
A subset of a simplex spanned by a subset of its vertices is called a \emph{face} of that simplex.
For example, simplices of dimension $0, 1, 2$ and $3$ are points, line segments, triangles, and tetrahedra respectively.
}

\revision{
A simplicial complex $K$ is a collection of simplices where the faces of any simplex are also in $K$ and the intersection of any two simplices is either empty or a face of both.
The underlying geometric space of $K$ is denoted as $|K|$.
We say that $K$ triangulates a manifold when $|K|$ is homeomorphic to a manifold.
The notion of a topological neighbourhood is defined for a simplicial complex by using the \textit{star} and \textit{link} of a simplex $\sigma$.
The star of $\sigma$ is the set of simplices that have $\sigma$ as a face, and the link of $\sigma$ consists of the faces of all simplices in the star of $\sigma$ that do not intersect $\sigma$.
}

\revision{
A scalar function $f : |K| \to \real$ is \pl when its values are only defined on the vertices of $K$ and extend to the interiors of the simplices via barycentric interpolation.
\textit{Critical vertices} of scalar \pl functions are defined using the upper and lower link of a vertex.
The \textit{upper link} of a vertex $v$ consists of all simplices $\sigma$ in the link of $v$ whose vertices have a higher function value than $f(v)$.
The \textit{lower link} is defined analogously with all simplices $\sigma$ in the link of $v$ whose vertices have a lower function value than $f(v)$.
}

\section{Order of Segments with Ideal Points}
\label{app:duality}
Example of when the lines of segments are be mapped to different wedges (see \cref{sec:predicates-reeb-space}). 

\begin{figure}[htbp]
    \centering
    \begin{subfigure}[b]{0.49\columnwidth}
        \includegraphics[width=\columnwidth]{./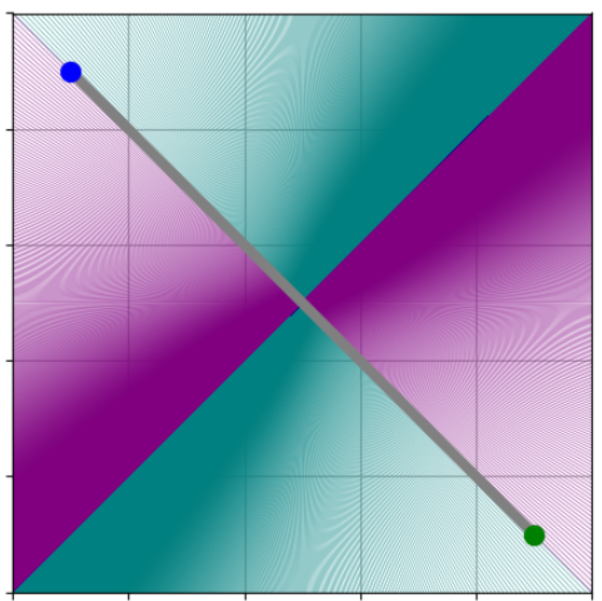}
        \caption{Lines in the plane.}
        \label{fig:sub1}
    \end{subfigure}
    \begin{subfigure}[b]{0.49\columnwidth}
        \includegraphics[width=\columnwidth]{./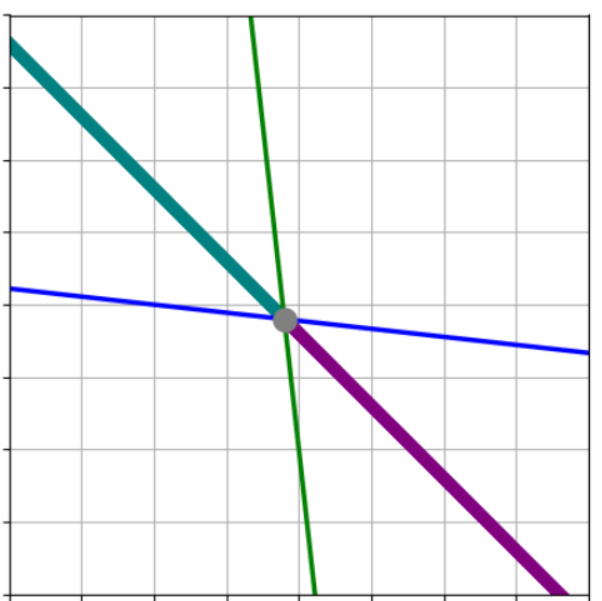}
        \caption{Dual Homogeneous Coordinates.}
        \label{fig:sub2}
    \end{subfigure}
    \caption{
        Given a segment with blue and green endpoints, we show which lines $\ell$ intersecting the halfway point of the segment are mapped in either of the two wedges that correspond to the segment in the dual projective plane.
        As $\ell$ becomes close to the line defined by the segment, its corresponding point in the dual projective plane becomes closer to the point that corresponds to the line defined by the segment in the primal plane.
        As $\ell$ becomes close to the line that connects the halfway point and the origin it moves away from the gray point in the dual projective plane and into infinity.
    }
    \label{fig:lines-wedges}
\end{figure}

\paragraph{Orientation Tests with Ideal Points.}
In the dual formulation of the segment order predicate described in \cref{sec:predicates-reeb-space}, lines that pass through the origin in the primal plane correspond to ideal points at infinity in the dual plane. This requires special handling, as orientation tests are not well defined in that case~\cite{edelsbrunnerSimulationSimplicityTechnique1990a}.
This is a consequence of the projective plane being a non-orientable topological space.
Nevertheless, the determinant used in orientation tests can still be meaningfully interpreted. 
When all three points are at infinity, the determinant will be zero, and this case can be perturbed with SoS.

When two points are at infinity, they can be geometrically interpreted as direction vectors in the affine plane at $z=0$.
An orientation between two direction vectors and a finite homogeneous point reduces to computing the determinant of the direction vectors scaled by the last coordinate of the finite point.
This is well defined both algebraically and geometrically. 
Geometrically, it is equivalent to the points $\ell^*_{3, 4}$ and $\ell^*_{5, 6}$ form \cref{fig:segment-order-projective} sliding along the lines $i^*_1$ and $i^*_2$ respectively to infinity (\revision{see \cref{fig:lines-wedges}}).
Algebraically, this follows from the expression:
\[
\det
\begin{pmatrix}
x_1 & y_1 & z_1 \\
x_2 & y_2 & 0 \\
x_3 & y_3 & 0
\end{pmatrix}
= 
z_1
\det
\begin{pmatrix}
    x_2 & y_2 \\
    x_3 & y_3
\end{pmatrix}
.
\]
We do not need to explicitly account for the sign of $z_1$ because it is already accounted for in the evaluation of the orientation test in homogeneous coordinates~\cite{edelsbrunnerSimulationSimplicityTechnique1990a}.
However, we need to normalize the direction vectors $[x_2, y_2, 0]$ and $[x_3, y_3, 0]$ using their first non-zero coordinate with two additional sign checks.

When one point is at infinity, the orientation test reduces to comparing the orientation of the affine vector defined by the two finite points and the affine vector defined by the point at infinity because:
\[
\det
\begin{pmatrix}
x_1 & y_1 & z_1 \\
x_2 & y_2 & z_2 \\
x_3 & y_3 & 0
\end{pmatrix}
= 
z_1 z_2
\det
\begin{pmatrix}
    \frac{x_2}{z_2} - \frac{x_1}{z_1} & \frac{y_2}{z_2} - \frac{y_1}{z_1} \\
    x_3 & y_3
\end{pmatrix}
.
\]
As in the previous case, we do not need to explicitly account for the signs of $z_1$ and $z_2$, because they will be accounted for in the evaluation of the determinant, but we do need to account for the sign of normalizing the direction vector (the point at infinity).

\end{document}